\newtheorem{theorem}{Theorem}
\newtheorem{lemma}[theorem]{Lemma}
\title{Decentralization Cheapens Corruptive Majority Attacks}
\author{Stephen H. Newman\footnote{Princeton University. Email: \href{mailto:stephen.newman@princeton.edu}{stephen.newman@princeton.edu}}}
\newcommand{\R}{\mathbb{R}}
\newcommand{\E}{\mathbb{E}}
\newcommand{\defeq}{\overset{\text{\tiny def}}{=}}
\newcommand{\pool}{M}  
\newcommand{\pf}{\phi}  
\newcommand{\tpow}{\Phi}  
\newcommand{\reward}{R}  
\newcommand{\atime}{T}  
\newcommand{\valmul}{v}  
\newcommand{\probflip}{f}  
\newcommand{\probthresh}{f_{\max}}
\newcommand{\powthresh}{\gamma}
\newcommand{\gdpowfrac}{g_{\mathrm{Def}}}
\newcommand{\specpayout}{c}
\DeclareMathOperator*{\argmin}{\arg\min}
\begin{document}
	
\maketitle

\begin{abstract}
	Corruptive majority attacks, in which mining power is distributed among miners and an attacker attempts to bribe a majority of miners into participation in a majority attack, pose a threat to blockchains. Budish bounded the cost of bribing miners to participate in an attack by their expected loss as a result of attack success. We show that this bound is loose. In particular, an attack may be structured so that under equilibrium play by most miners, a miner's choice to participate only slightly affects the attack success chance. Combined with the fact that most of the cost of attack success is externalized by any given small miner, this implies that if most mining power is controlled by small miners, bribing miners to participate in such an attack is much cheaper than the Budish bound. We provide a scheme for a cheap corruptive majority attack and discuss practical concerns and consequences.
\end{abstract}

\textbf{Keywords:} Blockchain, Majority Attack, Corruptive Majority Attack\\

\textbf{Acknowledgments:} Thanks to Matt Weinberg for substantial discussion, feedback, and advice.

\pagebreak

\section{Introduction}

Blockchain- and consensus-based ledger protocols are generally susceptible to \textit{majority attacks}, in which an attacker gains control of a majority of mining power and uses it to create a new canonical transaction history which diverges substantially from the original heaviest chain \cite{nakamoto2008bitcoin}. This attack may be highly profitable: on currency-only blockchains, this enables doublespend attacks and may cause chaos and/or devaluation, all of which may be used for economic gain. On blockchains that implement higher-level protocols and applications, such as Ethereum, attackers may also retroactively alter the state of smart contracts or other time-varying constructs, with similar consequences.

Historically, concerns about majority attacks have been dismissed as irrelevant to the current state of major cryptocurrencies. Budish argues that this is not necessarily the case in the long run (or even currently): the cost of an attack on the blockchain is proportional to the mining payout rate, so large transaction flow relative to this cost makes majority attacks profitable \cite{budish2018economic}. We show that under simple assumptions about the distribution of miner powers, the situation is far \textit{worse} than Budish's upper bound suggests. Budish bounds the necessary payout to miners as the cost to them of attack success. While the cost of attack success is indeed well-estimated by Budish, individual miners' actions are typically not substantially causally correlated with attack success, and so it suffices to pay miners their cost of attack success \textit{times the marginal increase in attack success probability that resulted from their actions}. As a result, in the by-design scenario where miners are small and therefore no individual or small group can exercise substantial control over a blockchain (though this is not always the stable state of affairs \cite{arnosti2022bitcoin}), corruptive majority attacks are both cheap and hard to prevent for proof-of-work blockchains, implying that cryptocurrencies are less game-theoretically stable than previously believed.

After a brief exposition of our model, we develop a simple framework for miner incentive analysis that illuminates the action/result-correlation aspect of their incentive problem. In particular, we show that for small miners, costs of participation in an attack are mostly externalized and sometimes very small. We then develop a more rigorous model of corruptive attacks, in which miners can change their behavior from timestep to timestep depending on state of the attack, and show that appropriately structured attacks succeed and are cheap with high probability. We give a practical example of such an attack on Bitcoin, including calculation of expected cost and discussion of profitability. We discuss novel and previously proposed economic-incentive-based prevention and mitigation strategies. We also give a brief overview of some of the non-economic incentives that affect the feasibility/likelihood of such an attack.

\subsection*{Related Work}

The consideration of majority attacks against digital currencies dates back to at least Nakamoto's work \cite{nakamoto2008bitcoin}. Bonneau noted the danger of bribery-based attacks and discussed attack methodologies and potential countermeasures in \cite{bonneau2016buy}. Since then, a variety of game-theoretic attack techniques (e.g. \cite{liao2017incentivizing}) and practical methodologies (e.g. \cite{mccorry2019smart}) have been proposed. Judmayer et al. provide a broad overview of bribery attack modeling in \cite{judmayer2021sok}, including majority attacks. Several majority attacks have also been conducted against a variety of cryptocurrencies, with varying results, as summarized (as of 2019) by \cite{shanaev2019cryptocurrency}.

The central -- and often ignored -- obstacle in conducting a majority attack is the long-term cost to miners resulting from currency devaluation stemming from the attack, as first noted in \cite{nakamoto2008bitcoin}. Budish analyzed this concept in detail, providing a formal model of the cost to a majority miner and approximating it for Bitcoin in \cite{budish2018economic}. Moroz et al. responded with a model of an attack-counterattack game with a liquid mining power marketplace, arguing that in the Budish setting, the threat of counterattack served to deter majority attacks designed to allow doublespending \cite{moroz2020double}.

There is also growing interest in the effects of incentive manipulation on a wider class of social-choice mechanisms without money. For instance, bribery \cite{faliszewski2016control} and coalition effects \cite{xia2023impact} have been studied in the context of voting, and there is continuing investigation into similar effects in matching markets and other contexts.

\section{Modeling Mining and the Cost of Corruption}

\subsection{Mining Model}
We assume a model of mining and miner incentives similar to that of Budish \cite{budish2018economic}. We assume a fixed set $\pool$ of miners, with a function $\pf:\pool\to \R_{\geq 0}$ mapping miners to their powers (hashrates in the case of hash-based PoW coins, for instance), and denote the total power $\tpow\defeq \sum_{m\in \pool}\pf(m)$.
We assume that each block mined is mined by a random miner, chosen at time of mining, with independent probability $\frac{\pf(m)}{\tpow}$ of selecting miner $m$ to mine any given block. A miner who mines a block receives reward $R$ for doing so. We assume that miners may choose to dedicate mining power to an attacker whose goal is to execute a majority attack. For simplicity (and optimistically for stability against such attacks), we assume that miners who attack on a mining turn receive no block rewards that turn.

\subsection{Miner Valuation Model}
We assume that miners extract value exclusively from present and future mining activity. In particular, we will consider two primary costs to miners of selling mining power: the direct expected lost revenue from lost time mining, and the expected lost future revenue from the increased chance of a majority attack as a result of providing hashpower. We will assume (pessimistically for the attacker) that a successful majority attack will cost a miner exactly their time-discounted expected future revenue\footnote{This corresponds to the assumption that a successful attack will prevent any future mining, while a failed attack does not change the value of future mining, causing the largest possible losses to miners.}. This will be indicated by $\valmul R$ for some value $\valmul$ to be bounded later.

As in Budish's work, the primary cost of attacks will lie in compensating miners for the change in the likelihood of attack success due to their participation -- the cost of corruption. We wish to bound this.

\subsection{Budish's Cost of Corruption}

Budish argued (and we agree) that the cost to miners caused by an attack is bounded by a sum of two costs. The more obvious component is their expected \textit{immediate loss}: their lost mining revenue as a result of devoting mining time to the attack. This may be trivially upper-bounded as \[\atime\frac{\pf(m)}{\tpow} \reward\] for an attack stretching over $\atime$ timesteps, where $\reward$ is the block reward, and will be negligible compared to the second component of the loss in most cases.

The more subtle loss that a miner incurs is their expected \textit{devaluation loss}: their time-discounted future loss of revenue as a result of currency collapse. Following Budish's analysis, we may bound this by \[\valmul \frac{\pf(m)}{\tpow}\reward\] for some constant multiplier $\valmul$.\footnote{We assume that a coin will have no change in value under attack failure and will undergo total and permanent devaluation under attack success. This represents an optimistic assumption from the point of view of stability, as it drives the expected cost to miners of an attack (and therefore the cost of said attack) as high as possible.} For instance, assuming constant distribution of mining power for a cryptocurrency that pays $k$ fixed block rewards $\reward$ per year, $\valmul = \frac{1}{1-\sqrt[k]{0.95}}\approxeq 20k$ would reflect the sum of expected income for all time with 5\%/year discounting applied. In most cases, however, $\valmul$ is substantially lower. For instance, assuming that $\reward$ halves every few years (or that the total mining power of other miners is large and doubles every few years) we may bound $\valmul$ as the number of blocks in just a few years -- and assumptions about ongoing costs of mining and revenue from selling mining equipment in a crash further reduce this estimate.

Combining the above two losses, we may weakly incentivize a miner $m\in \pool$ to participate in an attack of duration $\atime$ by paying them $(\atime+\valmul) \frac{\pf(m)}{\tpow} \reward$. Summing across miners, we see that it costs \[(\atime + \valmul)\reward\] to pay all miners to participate.

As Budish points out, this may already be dangerous. $\atime \reward$ is low compared to potential profit, and $\valmul$ may be as well -- for blockchains where $\tpow$ is expected to rise rapidly (PoW coins, for instance), gross mining income for given hardware is expected to drop proportionately, implying that almost all income is obtained in the next few years of mining. For instance, assuming $\valmul\approxeq 2\cdot 10^5$, the number of Bitcoin blocks in a little under four years, we get an attack price above $1.2\cdot 10^6$ BTC, slightly over 5\% of BTC in existence. Even in cases where mining power does not substantially increase (some PoS protocols, for instance) and currency valuation is expected to remain approximately constant, $\valmul$ may be bounded by the number of blocks in twenty years, as $\valmul \frac{\pf(m)}{\tpow}R$ is then enough to purchase a stock market portfolio that, if it yields 5\%/year over inflation, will be more profitable than mining at relative power $\frac{\pf(m)}{\tpow}$ assuming constant currency valuation.

\subsection{The Participation-Success Matrix}
We first consider a binary model for action during an attack: during an attack, a player $p$ may either \textit{participate} or \textit{refuse} to participate, and the attack will succeed or fail with probabilities determined by the participation/refusal of the various miners. Player $p$'s reward will depend on both participation/refusal and success/failure.

We note something simple but interesting: when a player's reward depends mostly on success or failure, and success/failure is only slightly affected by participation/refusal, the marginal cost of participation/refusal is much lower than the difference in reward between the success and failure cases. Consider, for instance, a small miner who participates in a majority attack that is expected to permanently crash Bitcoin if it succeeds. Assume that their expected time-discounted future valuation was $U$, and that their expected mining reward over the course of the attack period is $u\ll U$. Then, excluding order-$u$ increases in payout due to difficulty reduction, their mining payoff matrix is 
\begin{center}
	\begin{tabular}{c||c|c|}
		& Attack succeeds & Attack fails\\
		\hline \hline
		Participate & $0$ & $U$\\
		\hline
		Refuse & $u$ & $U+u$\\
		\hline
	\end{tabular}
\end{center}

For instance, considering the cases where the miner is sure that the attack will (succeed/fail) regardless of their action, their expected loss as a result of participating in the attack is bounded by ($0$/$u$). Relaxing this slightly, if they are sure that their participation affects attack success chance by at most $\epsilon$, their expected losses are at most $u+\epsilon U$.

\subsection{Bounding Expected Devaluation Loss}
The difference between our bound on cost of corruption and Budish's is simple: while Budish proposed paying all miners their expected losses from an attack, we propose paying them only the \textit{marginal increase} in their expected losses as a result of their participation in the attack. We consider an explicitly multiparty attack: the attacker attempts to corrupt many smaller miners into attack participation, and therefore must pay them only the damage they expect to cost themselves by joining. Under this view, Budish's cost of corruption is not tight -- it pays miners enough to convince them to join even in a case where act/no-act and success/failure are perfectly correlated. This is generally far from the truth: the smaller a miner is, the less impact their choice has on their perceived likelihood of attack success (and therefore their expected loss).

Let $\probflip(\frac{\pf(m)}{\tpow}):[0, 1]\to [0, 1]$ be a bound on the probability, as estimated by a miner $m$, that miner $m$'s participation in the attack will cause it to succeed (i.e. that it will not succeed if they do not participate, and will succeed if they do). Then we may tighten the bound on their devaluation loss to $\probflip(\frac{\pf(m)}{\tpow}) \frac{\pf(m)}{\tpow} \reward \valmul$ and the bound on their total loss to \[\left(\atime + \probflip\left(\frac{\pf(m)}{\tpow}\right)\valmul\right)\frac{\pf(m)}{\tpow}\reward\]

\section{A Thresholding Corruptive Attack}
We now analyze the attack implied by the above payout rule. Assume for the moment that we can verify full participation in an attack. Pick a start time for the attack, and pay any miner $m$ who participates in the attack \[\left(\atime + \probthresh\valmul\right)\frac{\pf(m)}{\tpow}\reward\] for $\probthresh$ chosen such that $\sum_{m:\probflip\left(\frac{\pf(m)}{\tpow}\right)\leq \probthresh} \pf(m)$ is substantially greater than $\frac{\tpow}{2}$. If each miner acts rationally, any miner $m$ with $\probflip(\frac{\pf(m)}{\tpow})\leq \probthresh$ will participate, and our attack will succeed. Summing across all miners, total cost of the attack is bounded by \[\left(\atime + \probthresh \valmul\right) \reward\]
which for small values of $\probthresh$ is much smaller than the attack cost under Budish's analysis.

\subsection{Estimating $\valmul$}

$\valmul$ depends on changing economic conditions, the specifics of the protocol under attack, and other real-world factors \cite{budish2018economic}. In Bitcoin and other PoW coins, for instance, $\valmul$ equal to the number of blocks in a few years (i.e. $\frac{\pf(m)}{\tpow} \valmul R$ equal to undiscounted gross earnings over the next few years) may be a reasonable estimate, given reward halving, electricity costs, continual improvements in mining hardware, and regulatory concerns. Notably, payouts from attack participation provide substantial security as compared to mining, for the same real-world reasons as cited above. For more in-depth discussion of $\valmul$, see \cite{budish2018economic}.

\subsection{Estimating $\probflip$}
We have two means of bounding $\probflip$. First, under the assumption that \[\left(\sum_{m\in \pool:\frac{\pf(m)}{\tpow}\leq \probthresh} \frac{\pf(m)}{\tpow}\right)\gg\frac{1}{2},\] $\probflip\left(\frac{\pf(m)}{\tpow}\right)$ is likely bounded on the close order of $\frac{\pf(m)}{\tpow}$, as the expected distribution of participants should not concentrate except possibly at nearly all players participating or nearly no players participating (in which case $\probflip(m)$ is very low). As such, $\probthresh$ is only required to be big enough to make the above fraction substantially greater than $\frac{1}{2}$, and will therefore be small for a well-decentralized blockchain. The cheaper attack price that results from smaller miners under this bound corresponds to the fact that these smaller miners externalize more of the damage caused by participating -- a miner's expected loss from participation is quadratic in their power, but their expected damage is linear. 

Second, and more concerningly, if we can convince all miners that the attack is destined to succeed, $\probflip$ becomes 0, as no individual miner's choice will affect the attack. Convincing miners that the attack will almost surely succeed likewise forces $\probflip$ very low.

Both of these dynamics will prove relevant in the next model.

\section{Refining the Model: A Block-By-Block Attack}
\label{sec:complicated-attack}

The previous model relied on the assumption that miner behavior over the course of an attack will not change, and assumed a somewhat arbitrary bound on $\probflip$. We now consider a multiparty refinement of the classical model for majority attacks. An attack is modeled as a $\atime$-step game with chance. The state of the game at the conclusion of timestep $t\in \{1, \dots, \atime\}$ is given by an integer $l_t$ indicating the length discrepancy between the attacking and defending chains (positive if the defending chain is longer), with $l_0$ the distance from the attack fork point to the tip of the heaviest chain at the start of the attack.

The attacker wins the game at the first timestep $t$ where $l_t=0$; if no such timestep has occurred by $t=\atime$, the attacker loses. As in \cite{judmayer2021sok} (in the case of zero native attacker power), miners are partitioned into two groups. We assume that a substantial set of the miners are honest (will always defend), either because we have not provided them sufficient economic incentive to attack, as will be the case for very large miners, or because of non-economic factors. We will model the remaining non-committed miners as rational agents that choose to mine for either the attacker or the defender based on expected time-discounted future rewards. At each step $t$, all miners decide to either attack or defend, and a random miner is chosen with probability proportional to their mining power. If that miner is defending, they mine a block on the canonical chain, setting $l_{t}=l_{t-1}+1$. If attacking, they mine a block on the attacking chain, setting $l_t = l_{t-1}-1$, and they receive a payout $\specpayout_{t-1, l}$ from the attacker for their choice to attack. Equivalently, we may model miners as making the choice to attack/defend after they are selected as the current round's block miner. At the end of the game, if the attacker lost, each miner $m$ receives payout $\valmul\frac{\pf(m)}{\tpow} \reward$, their expected future mining returns.

Our payout rule will not incentivize participation by miners of size $>\powthresh\tpow$, so we assume for simplicity that all such miners are part of the honest pool -- though if they do, it will aid the attack. We assume that the honest miners have combined mining power $\leq \gdpowfrac\tpow$. We will again exploit the fact that the actions of small miners (of power $\leq \powthresh\tpow$, for our purposes) are only slightly correlated with attack success chance to construct a cheap attack.

\subsection{A Payout Rule With Unique Subgame Perfect Nash Equilibrium}

As before, our payouts will come in two varieties. Every time a miner mines a block on the attacking chain (as opposed to the defending chain), they lose their block reward $\reward$ that would have been paid out on the defending chain. We therefore provide them a payout of $\reward$ \textit{via the defending chain} every time they mine an attacking block. This means that, no matter the eventual state of the attack, their block-reward payouts are identical across all of their strategy options, and their net rewards therefore follow those of the attacking game described above up to a constant. To incentivize miners to participate in the attack, it therefore suffices to provide a payout rule that enforces a unique Nash equilibrium of participation in said game. From an intuitive point of view, we hope to create a payment rule that causes the attack to have a very high chance of succeeding regardless of the behavior of any individual miner. This, in turn, should allow us to pay each individual miner very little, as their marginal loss as a result of participation will be low.

We first define a pseudo-value function 
\begin{align*}
	w_{t, l}^{\max} = \begin{cases}
		l = 0: & 0\\
		t=\atime, l > 0: & \valmul \powthresh \reward\\
		t<\atime, l>0: & \gdpowfrac w_{t+1, l + 1}^{\max} + (1-\gdpowfrac) w_{t+1, l - 1}^{\max} + \powthresh (w_{t+1, l + 1}^{\max} - w_{t+1, l - 1}^{\max})
	\end{cases}
\end{align*}

This is motivated by the following theorem:

\begin{theorem}
	\label{theorem:ne}
	Fix a payout scheme where the miner who mines block $t$ from the initial state $(t-1, l)$ receives (if they mined the block on the attacking chain) $\specpayout_{t-1, l} = (w^{\max}_{t, l + 1} - w^{\max}_{t, l - 1})$. This scheme admits a unique subgame perfect Nash equilibrium in the mining game described earlier, under which all non-committed miners participate in the attack on every block.
\end{theorem}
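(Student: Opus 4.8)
The plan is to reduce the mining game to the pure ``attacking game'' and then run a backward induction over block indices. As the excerpt already notes, reimbursing $\reward$ through the defending chain on every attacking block makes each miner's block-reward income a random variable independent of all miners' attack/defend choices, so up to an additive constant miner $m$'s payoff equals the sum of the $\specpayout$-payouts it collects plus $\valmul\tfrac{\pf(m)}{\tpow}\reward$ in the event the attacker loses. It thus suffices to analyze the subgame perfect equilibria of this reduced $\atime$-step stochastic game and show that the only one has every non-committed miner attacking whenever selected.

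Fix the candidate all-attack profile and let $V^m_{t,l}$ be miner $m$'s continuation value at the end of step $t$ in state $l$ under it. I would first record the linear recursion $V^m_{t,l}$ satisfies — up-move with the honest fraction, down-move otherwise, the payout added exactly when $m$ is itself selected, terminal value $\valmul\tfrac{\pf(m)}{\tpow}\reward$ on $l>0$ and $0$ on $l=0$ — and observe that $w^{\max}_{t,l}$ obeys the same recursion with the honest fraction replaced by its bound $\gdpowfrac$, the probability of $m$'s own selection replaced by its bound $\powthresh$, and the terminal value replaced by $\valmul\powthresh\reward$; i.e., $w^{\max}$ is the continuation value the worst-case small miner (power fraction $\powthresh$, facing honest fraction $\gdpowfrac$) would earn. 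Then, by downward induction on $t$, I would prove: (i) $w^{\max}_{t,l}$ is nonnegative and nondecreasing in $l$ with $w^{\max}_{t,0}=0$; and (ii) the gap $\Delta^m_{t,l}\defeq w^{\max}_{t,l}-V^m_{t,l}$ is nonnegative and ``skip-one monotone'', $\Delta^m_{t,l+1}\ge\Delta^m_{t,l-1}$, with strict inequality outside the doomed states (those from which too few blocks remain for $l$ ever to reach $0$).

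Given (i) and (ii), the rest is short. By the one-shot-deviation principle it suffices to check in every subgame that a selected non-committed miner $m$ weakly prefers attacking: attacking gives $\specpayout_{t-1,l}+V^m_{t,l-1}=(w^{\max}_{t,l+1}-w^{\max}_{t,l-1})+V^m_{t,l-1}$ and defending gives $V^m_{t,l+1}$, so the inequality needed is exactly $\Delta^m_{t,l+1}\ge\Delta^m_{t,l-1}$, which is (ii). Hence the all-attack profile is a subgame perfect equilibrium; since the inequality is strict on every non-doomed subgame, any subgame perfect equilibrium coincides with it there, and on doomed subgames the attacker has already lost and all continuations yield the same outcome. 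So the equilibrium outcome is unique and, started from the (non-doomed) initial state, has every non-committed miner attacking on every block — the assertion of the theorem.

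The main obstacle is (ii). The trouble is that $w^{\max}$ is \emph{not} convex in $l$: it is flat at $\valmul\powthresh\reward$ on doomed states, so the ``per-step reward'' $w^{\max}_{t+1,l+1}-w^{\max}_{t+1,l-1}$ feeding the recursion for $\Delta^m$ is not itself monotone in $l$, and a plain discrete second-difference estimate does not close. Instead I would carry the exact identity $\Delta^m_{t,l}=\big[(\gdpowfrac-h)+(\powthresh-\tfrac{\pf(m)}{\tpow})\big](w^{\max}_{t+1,l+1}-w^{\max}_{t+1,l-1})+h\,\Delta^m_{t+1,l+1}+(1-h)\,\Delta^m_{t+1,l-1}$, with $h\le\gdpowfrac$ the true honest fraction, through the induction, exploiting $\tfrac{\pf(m)}{\tpow}\le\powthresh$, $h\le\gdpowfrac$, and the mild bound $\gdpowfrac+\powthresh\le 1$ (automatic once honest power is a minority and $\powthresh$ is small) to keep the ordering $\Delta^m_{t,l+1}\ge\Delta^m_{t,l-1}$ from inverting at the doomed frontier. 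The finicky residue is the slack between $\gdpowfrac,\powthresh$ and the true honest fraction and miner size: making the \emph{strict} form of (ii) hold right up to the doomed frontier when $h<\gdpowfrac$ is what forces Theorem~\ref{theorem:ne}'s ``unique'' to be read with honest power at its bound, or with non-committed miners strictly below size $\powthresh\tpow$ — and this, together with the innocuous ties at doomed states, is what must be nailed down for a fully clean statement.
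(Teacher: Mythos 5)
Your architecture matches the paper's: net out block rewards via the defending-chain reimbursement, apply the one-shot-deviation principle, and reduce everything to the single inequality $V^m_{t,l+1}-V^m_{t,l-1}\leq w^{\max}_{t,l+1}-w^{\max}_{t,l-1}$, i.e.\ that the payout covers the selected miner's marginal continuation loss from attacking. The gap is in how you propose to prove that inequality. Your $\Delta$-monotonicity induction, run through the exact identity you display, produces at each step the term $\kappa\bigl[(w^{\max}_{t+1,l+2}-w^{\max}_{t+1,l})-(w^{\max}_{t+1,l}-w^{\max}_{t+1,l-2})\bigr]$ with $\kappa\geq 0$, i.e.\ it needs a nonnegative second difference of $w^{\max}$ in $l$ -- exactly the convexity that, as you yourself observe, fails near the doomed frontier where $w^{\max}$ saturates at $\valmul\powthresh\reward$ (indeed $w^{\max}$ is a saturating, hence locally concave, survival probability). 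You flag this as ``the finicky residue that must be nailed down'' but do not resolve it, so the central step of the proof is not actually established.

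The paper closes this step by a different and much simpler observation that your sketch misses: define the per-miner value function $w_{t,l}(m)$ by the same recursion with $\frac{\pf(m)}{\tpow}$ in place of $\powthresh$ (and terminal value $\valmul\frac{\pf(m)}{\tpow}\reward$), and note that it is \emph{positively linear (homogeneous) in} $\frac{\pf(m)}{\tpow}$ -- in fact $w_{t,l}(m)=\frac{\pf(m)/\tpow}{\powthresh}\,w^{\max}_{t,l}$ exactly, since the payout term $c_{t,l}$ is defined from $w^{\max}$ rather than from $w(m)$ itself. Monotonicity of $w^{\max}$ in $l$ then immediately gives $w_{t,l+1}(m)-w_{t,l-1}(m)=\frac{\pf(m)/\tpow}{\powthresh}\bigl(w^{\max}_{t,l+1}-w^{\max}_{t,l-1}\bigr)\leq w^{\max}_{t,l+1}-w^{\max}_{t,l-1}$, with strictness for $\frac{\pf(m)}{\tpow}<\powthresh$; no convexity in $l$ is needed anywhere. (Note the paper treats the honest fraction as exactly $\gdpowfrac$ in the value recursion; your attempt to carry a true honest fraction $h<\gdpowfrac$ through the induction is what destroys this exact scaling and creates the difficulty you could not close. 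If you want that extra generality you would still need a new argument, e.g.\ a coupling showing the difference inequality is only helped by a smaller honest fraction.) Your treatment of uniqueness via strictness off the doomed states is fine and comparable to the paper's.
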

\begin{proof}
	
	We define the value function for miner $m$ as
	\begin{align*}
		w_{t, l}(m) = \begin{cases}
			l = 0: & 0\\
			t=\atime, l > 0: & \valmul \frac{\pf(m)}{\tpow} \reward\\
			t<\atime, l>0: & \gdpowfrac w_{t+1, l + 1} + (1-\gdpowfrac) w_{t+1, l - 1} + \frac{\pf(m)}{\tpow} c_{t, l}
		\end{cases}
	\end{align*}
	
	We induct backwards from $t=\atime$ on the joint hypotheses that:
	\begin{enumerate}
		\item $w_{t, l}(m)$ is equal to the expectation of the sum of attack payouts and time-discounted post-attack mining rewards for a miner $m$ in the non-committed group playing the subgame perfect Nash equilibrium strategy.
		\item Any non-committed miner selected at time $t$ is incentivized to participate in the attack at step $t$ given the above payout.
	\end{enumerate}
	
	The first claim is trivial at $t=\atime$, as the attack's success/failure is determined at the end of step $\atime$, and our choice for $w_{\atime, l}$ reflects the expected future rewards under those circumstances. The second claim at $t=\atime$ follows fairly simply: if their decision would determine attack success, they are paid $\valmul\powthresh\reward> \valmul \frac{\pf(m)}{\tpow} \reward$ (their expected loss from attack success), and if not, they are paid nothing (for uniqueness, assume a very small payment in this case).
	
	Given that the hypotheses hold for $t>t_0$, we may show that hypothesis 1 holds for $t=t_0$: observe that as every non-committed miner will attack on the next step, the state $(t_0, l)$ has chance $\gdpowfrac$ of evolving to $(t_0+1, l+1)$ and chance $(1-\gdpowfrac)$ of evolving to $(t_0+1, l-1)$. Moreover, the chance that miner $m$ will be selected to mine the next block (and therefore reap an additional reward of $c_{t_0-1, l}=(w^{\max}_{t_0, l + 1} - w^{\max}_{t_0, l - 1})$) is $\frac{\pf(m)}{\tpow}$.
	
	It remains to show hypothesis 2 for $t=t_0$ given hypothesis 1 for $t\geq t_0$ and hypothesis $2$ for $t>t_0$. We may observe that by hypothesis 1, the loss incurred for participating on step $t_0$ is $(w_{t_0, l+1}(m) - w_{t_0, l-1}(m))$. It therefore suffices to demonstrate that this is less than $w^{\max}_{t_0, l+1} - w^{\max}_{t_0, l-1}$. $w_{t, l}(m)$ is in fact positive-linear in $\frac{\pf(m)}{\tpow}$, and so as it is increasing in $l$, $(w_{t_0, l+1}(m) - w_{t_0, l-1}(m))$ is positive-linear in the same. At $\frac{\pf(m)}{\tpow}=\powthresh$, we have $w_{t, l}(m)=w^{\max}_{t, l}$ and so $(w_{t_0, l+1}(m) - w_{t_0, l-1}(m))=(w^{\max}_{t_0, l+1} - w^{\max}_{t_0, l-1})$; therefore, for $\frac{\pf(m)}{\tpow}\leq \powthresh$, we have $(w_{t_0, l+1}(m) - w_{t_0, l-1}(m))\leq (w^{\max}_{t_0, l+1} - w^{\max}_{t_0, l-1})$ as desired.
\end{proof}

The attacker must be able to credibly commit to their payout scheme to make the reward scheme accurate (and therefore for the equilibrium to hold). This is reasonable -- presuming that their payout for attack failure is at least $\valmul\powthresh\reward$ plus the sum of their payouts (which we will see is small), they are incentivized to keep paying under the Nash equilibrium regardless of the game state (as their reward is inverse to that of a miner with power $\powthresh\valmul$). This may also be generalized to arbitrary behavior, assuming that the attacker and the participant miners can agree on an expectation of future miner behavior.

\subsection{Success Likelihood, Expected Attack Length, and Expected Attack Cost}
We assume $\gdpowfrac+\powthresh<\frac{1}{2}$. Attack success probability is at least the probability that, had the attack been allowed to continue for $\atime$ steps whether or not $l_t$ became 0, $l_\atime$ would have been $\leq 0$. We know that $l_t$ decreases with probability $1-\gdpowfrac$ and increases with probability $\gdpowfrac$, so after $\atime$ steps, it has expectation $l_0- \atime(1-2\gdpowfrac)$ (assuming counterfactually that we continued the attack after $l_t=0$). Then for $\atime> \frac{l_0}{1-2\gdpowfrac}$, Hoeffding's inequality gives that the probability that $l_\atime>0$ is \[\leq \exp\left[- \frac{\frac{1}{2}\left(l_0- \atime(1-2\gdpowfrac)\right)^2}{\atime}\right]\]

Expected time to attack conclusion is bounded by the expected stopping time of the biased random walk described above, equal to $\frac{l_0}{1-2\gdpowfrac}$.

We now prove a generic bound on $\specpayout_{t-1, l}$. First, let $\left(X^{t, l}\right)_{t, l}$ be a $(t, l)$-indexed collection of random walks on the integers s.t. $X^{t, l}$ starts at time $t$ at position $l$, goes to time $\atime$, and increases/decreases at each step with probabilities $(\gdpowfrac+\powthresh)$, $(1-\gdpowfrac-\powthresh)$ respectively. For $i\geq t$, let $X^{t, l}_i$ be the position of $X^{t, l}$ at time $i$. The choice of the step probabilities gives us that, by its definition, $w^{\max}_{t, l} = \valmul\powthresh\reward \Pr\left[\min_{i\in \{t, \dots, \atime\}} X^{t, l}_i \leq 0\right]$. Now by coupling the walks $X^{t, l+1}$ and $X^{t, l-1}$ so that one increases on step $i$ iff the other does, we have
\begin{align*}
	w^{\max}_{t, l+1} - w^{\max}_{t, l-1} &= \valmul\powthresh\reward\Pr\left[\min_{i\in \{t, \dots, \atime\}} X^{t, l+1}_i \in \{1, 2\}\right]\\
	&=\valmul\powthresh\reward\sum_{\tau=t}^\atime \Pr\left[X^{t, l+1}_\tau\in \{1, 2\}\right] \Pr\left[\tau=\argmin_{i\in \{t, \atime\}} X^{t, l+1}_i|X^{t, l+1}_\tau\in \{1, 2\}\right]
\end{align*}

Each of the terms in the sum may be bounded. Proofs of these results are contained in the appendix.
\begin{lemma}
	\label{lem:probexactbound}
	\[\Pr\left[X^{t, l+1}_\tau\in \{1, 2\}\right] \leq \frac{1}{\sqrt{\tau - t}} \frac{\left(1-\gdpowfrac - \powthresh\right)^{5/2}}{\sqrt{2\pi}\left(\gdpowfrac + \powthresh\right)^{7/2}}\]
\end{lemma}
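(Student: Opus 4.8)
The plan is to strip the statement down to a sharp bound on a single binomial point-probability and then to prove that bound via Stirling's formula.

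\textbf{Reduction to a binomial coefficient.} Write $n := \tau - t$ and $q := \gdpowfrac + \powthresh$, and let $K$ be the number of up-steps $X^{t,l+1}$ takes over $[t,\tau]$, so that $K \sim \mathrm{Bin}(n,q)$ and $X^{t,l+1}_\tau = (l+1) + 2K - n$. The key preliminary is a parity observation: $X^{t,l+1}_\tau \equiv l + n + 1 \pmod 2$, so at most one of the target values $1$ and $2$ can be reached at time $\tau$. Hence
\[ \Pr\!\left[X^{t,l+1}_\tau \in \{1,2\}\right] = \Pr\!\left[\mathrm{Bin}(n,q) = k^\star\right] \]
for a single integer $k^\star$ (and the probability is $0$, so the bound is vacuous, when $k^\star \notin \{0,\dots,n\}$); in particular it is at most $\max_{0\le k\le n}\binom nk q^k(1-q)^{n-k}$. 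This parity step is not cosmetic: replacing it by the union bound $\Pr[X_\tau=1]+\Pr[X_\tau=2]$ costs a factor $2$, which the stated constant cannot afford when $q$ is near $\tfrac12$. Next I would observe that the right-hand side of the lemma equals $\big(\tfrac{1-q}{q}\big)^3 \cdot \tfrac{1}{\sqrt{2\pi n q(1-q)}}$, and since $q<\tfrac12$ the factor $\big(\tfrac{1-q}{q}\big)^3$ is $\ge 1$; so it suffices to prove the local-limit estimate $\max_{0\le k\le n}\binom nk q^k(1-q)^{n-k} \le \tfrac{1}{\sqrt{2\pi n q(1-q)}}$, with a comfortable margin available whenever $q$ is bounded away from $\tfrac12$.

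\textbf{The maximal point-probability via Stirling.} The ratio $a_{k+1}/a_k = \tfrac{(n-k)q}{(k+1)(1-q)}$ locates the mode at an index $k_0$ with $(n+1)q-1 \le k_0 \le (n+1)q$, so $k_0 = nq + O(1)$ and $n-k_0 = n(1-q)+O(1)$. Inserting the two-sided form $\ln m! = m\ln m - m + \tfrac12\ln(2\pi m) + \theta_m$, where $\tfrac{1}{12m+1} < \theta_m < \tfrac{1}{12m}$, into $\binom{n}{k_0}$ and collecting terms gives, for $1 \le k_0 \le n-1$, the exact identity
\[ a_{k_0} = \frac{1}{\sqrt{2\pi}}\sqrt{\frac{n}{k_0(n-k_0)}}\; e^{-n D(k_0/n \,\|\, q)}\; e^{\,\theta_n - \theta_{k_0} - \theta_{n-k_0}}, \]
with $D(\cdot\|\cdot)$ the binary Kullback--Leibler divergence, so $e^{-nD} \le 1$; and since $\theta_m$ is strictly decreasing (equivalently $(m+\tfrac12)\ln(1+\tfrac1m) > 1$) we get $\theta_n - \theta_{k_0} - \theta_{n-k_0} < \theta_n - \theta_{k_0} < 0$, so the last factor is $< 1$ too. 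Thus $a_{k_0} \le \tfrac{1}{\sqrt{2\pi}}\sqrt{n/(k_0(n-k_0))}$, and it remains to lower-bound $k_0(n-k_0)$ from the mode estimates and confront the target inequality; the edge case $k_0 = 0$ (which forces $q \le \tfrac1{n+1}$, where the stated constant is enormous) is handled directly, and $k_0 = n$ cannot occur since $q<\tfrac12$.

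\textbf{Where the difficulty lies.} The closing verification is the crux, because the inequality is asymptotically tight in the corner $q\to\tfrac12$, $n\to\infty$: the stated bound then tends to $\sqrt{2/(\pi n)}$, exactly the asymptotics of $\binom{n}{\lfloor n/2\rfloor}2^{-n}$, so essentially no slack remains and the estimate must be near-exact there. A crude bound such as $k_0(n-k_0) \ge (nq-1)(n(1-q)-1)$ together with $e^{\theta_n-\theta_{k_0}-\theta_{n-k_0}}\le 1$ is not quite enough once $q$ is within $O(1/n)$ of $\tfrac12$; the remedy is to split into two regimes. When $q$ is close to $\tfrac12$ the mode is close to $nq$, hence close to $n/2$: then $n D(k_0/n\|q) = O(1/n)$ (using $D(p\|q)\le(p-q)^2/(q(1-q))$ and $|k_0/n-q|\le 1/n$), the prefactor $\tfrac{1}{\sqrt{2\pi}}\sqrt{n/(k_0(n-k_0))}$ exceeds $\tfrac{1}{\sqrt{2\pi n q(1-q)}}$ by at most a $1+O(1/n)$ factor, and the genuinely negative remainder $\theta_n-\theta_{k_0}-\theta_{n-k_0}$ (of order $-1/n$, since near the symmetric case it is $< \tfrac{1}{12n} - \tfrac{2}{6n+1} < 0$) tips the balance, yielding $a_{k_0}\le\tfrac{1}{\sqrt{2\pi n q(1-q)}}$. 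When $q$ is bounded away from $\tfrac12$, crude estimates $k_0\ge c\,nq$, $n-k_0\ge c\,n(1-q)$ combined with $e^{\theta_n-\cdots}\le 1$ already suffice, since the $\big(\tfrac{1-q}{q}\big)^3$ margin dominates the $e^{1/12}$-scale losses. The finitely many small $n$ (where $k_0 = nq+O(1)$ says little) are then checked by hand, which is routine: the stated constant grows like $q^{-7/2}$, so only a bounded range of $q$ near $\tfrac12$ needs attention, and there finitely many elementary inequalities finish the argument.
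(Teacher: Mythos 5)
Your skeleton is the same as the paper's: use parity to reduce $\Pr[X^{t,l+1}_\tau\in\{1,2\}]$ to a single binomial point probability, bound that point probability by Stirling at the mode, and spend the factor $\left(\frac{1-q}{q}\right)^3$ (writing $q=\gdpowfrac+\powthresh$, $n=\tau-t$) on the discrepancy between the integer maximizer and the ideal location. The one genuine problem is the pivot ``it suffices to prove $\max_{k}\binom{n}{k}q^k(1-q)^{n-k}\le \frac{1}{\sqrt{2\pi nq(1-q)}}$'': that local-limit estimate is false, and not only in the $k_0=0$ corner you carve out. Take $q=\frac{3}{2n}$: the mode is $k_0=1$ and $a_1=nq(1-q)^{n-1}\to \frac{3}{2}e^{-3/2}\approx 0.3347$, while $\frac{1}{\sqrt{2\pi nq(1-q)}}\to\frac{1}{\sqrt{3\pi}}\approx 0.3257$ (the violation is already numerical at $n=100$). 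So the whole proof cannot be routed through that inequality. Your own closing paragraph contains the repair --- for $q$ bounded away from $\frac12$ you fall back on the $\left(\frac{1-q}{q}\right)^3$ margin rather than on the clean estimate, and indeed $k_0\ge\max(1,nq-1)\ge nq/2$ and $n-k_0\ge n(1-q)/2$ give $a_{k_0}\le\frac{2}{\sqrt{2\pi nq(1-q)}}$, which the cube margin absorbs once $q\le(1+2^{1/3})^{-1}\approx 0.44$ --- but the write-up must be restructured so that the false statement is never the stated goal, and the ``handled directly'' regime must be widened from $k_0=0$ to all small $nq$.

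A secondary point of comparison: the delicate analysis you anticipate near $q=\frac12$ (an $O(1/n)$ prefactor excess rescued by the negative Stirling remainder $\theta_n-\theta_{k_0}-\theta_{n-k_0}$) is an artifact of evaluating Stirling at the integer mode $k_0$. The paper instead extends the point mass to real arguments via the Gamma function and evaluates the Stirling bound at the exact real point $l^*=n(1-2q)$, where the entropy factor is exactly $1$ and the prefactor is exactly $\frac{1}{\sqrt{2\pi nq(1-q)}}$ with nothing left to fight; all discreteness is then pushed into the single ratio bound $K_{l^{\max}}/K_{l^*}\le\left(\frac{1-q}{q}\right)^3$ obtained from the term recurrence. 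Adopting that organization would let you drop both the two-regime split and the finitely-many-small-$n$ hand-checks, and is the cleanest way to make your argument near $q=\frac12$ airtight.
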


\begin{lemma}
	\label{lem:probminbound}
	\[\Pr\left[\tau=\argmin_{i\in \{t, \atime\}} X^{t, l+1}_i|X^{t, l+1}_\tau\in \{1, 2\}\right]\leq e^{-\frac{1}{2}(\atime-\tau)(1-2\gdpowfrac-2\powthresh)^2}\]
\end{lemma}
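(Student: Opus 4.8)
The plan is to discard everything about the trajectory before time $\tau$, reduce the conditional event to a one-sided statement about a fresh biased random walk over the remaining $\atime-\tau$ steps, and finish with Hoeffding's inequality.

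First I would note that the event ``$\tau$ minimizes $X^{t,l+1}$ over $\{t,\dots,\atime\}$'' (under any tie-breaking convention) in particular forces $X^{t,l+1}_i \geq X^{t,l+1}_\tau$ for every $i \in \{\tau+1,\dots,\atime\}$. Hence the conditional probability in the lemma is at most
\[
\Pr\!\left[\,X^{t,l+1}_i \geq X^{t,l+1}_\tau \text{ for all } i\in\{\tau+1,\dots,\atime\}\ \Big|\ X^{t,l+1}_\tau\in\{1,2\}\,\right].
\]
Because $\tau$ is a deterministic time (not a stopping time), the post-$\tau$ increments of $X^{t,l+1}$ are i.i.d.\ $\pm1$ variables, each equal to $+1$ with probability $\gdpowfrac+\powthresh$, and are independent of $\sigma(X^{t,l+1}_t,\dots,X^{t,l+1}_\tau)$ and therefore of the conditioning event. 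Writing $S_k$ for the sum of the first $k$ of these increments, the displayed quantity equals $\Pr\!\left[S_k\geq 0 \text{ for all } k\leq \atime-\tau\right]$, which is at most $\Pr\!\left[S_{\atime-\tau}\geq 0\right]$ since the terminal value being nonnegative is necessary. (When $\tau=\atime$ this is vacuously $\leq 1$, matching the bound.)

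It then remains to bound $\Pr\!\left[S_{\atime-\tau}\geq 0\right]$. Each increment lies in $[-1,1]$ and has mean $(\gdpowfrac+\powthresh)-(1-\gdpowfrac-\powthresh) = -(1-2\gdpowfrac-2\powthresh)$, which is strictly negative by the standing assumption $\gdpowfrac+\powthresh<\frac12$, so $\E[S_{\atime-\tau}] = -(\atime-\tau)(1-2\gdpowfrac-2\powthresh)$. Applying Hoeffding's inequality to a sum of $\atime-\tau$ independent variables each of range $2$, with deviation $t=(\atime-\tau)(1-2\gdpowfrac-2\powthresh)$, gives
\[
\Pr\!\left[S_{\atime-\tau}\geq 0\right] = \Pr\!\left[S_{\atime-\tau}-\E S_{\atime-\tau}\geq t\right] \leq \exp\!\left(-\frac{2t^2}{4(\atime-\tau)}\right) = \exp\!\left(-\tfrac12(\atime-\tau)(1-2\gdpowfrac-2\powthresh)^2\right),
\]
which is exactly the claimed inequality.

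I do not expect a real obstacle here; the only points requiring care are (i) confirming that whatever tie-breaking is used for $\argmin$ still implies $X^{t,l+1}_i\geq X^{t,l+1}_\tau$ for $i>\tau$, and (ii) justifying that conditioning on $\{X^{t,l+1}_\tau\in\{1,2\}\}$ leaves the post-$\tau$ increments untouched, which follows because $\tau$ is a fixed time. A sharper constant could in principle be extracted from a Chernoff bound tailored to $\pm1$ steps, but since Hoeffding already delivers the stated exponent $\tfrac12(\atime-\tau)(1-2\gdpowfrac-2\powthresh)^2$, I would not pursue it.
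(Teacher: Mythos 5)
Your proposal is correct and follows essentially the same route as the paper: restrict attention to the post-$\tau$ portion of the walk, use independence of the post-$\tau$ increments to discard the conditioning on $X^{t,l+1}_\tau\in\{1,2\}$, relax the minimum condition to the single requirement that the terminal value be at least the value at $\tau$, and apply Hoeffding's inequality to the sum of $\atime-\tau$ increments of range $2$. Your write-up is if anything slightly more careful about the tie-breaking and the independence justification.
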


We also require a technical lemma:
\begin{lemma}
	\label{lem:tech1}
	Let $0<a\leq 1$. Then \[\sum_{i=t}^T \frac{e^{-a(T-i)}}{\max(\sqrt{i-t}, 1)}\leq \min\left(\frac{2}{\sqrt{T+1-t - 2\frac{\ln (1+\sqrt{T+1-t})}{a}}}, 1+ \frac{1}{1-e^{-a}}\right)\]
\end{lemma}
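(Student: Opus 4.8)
The plan is to prove the two quantities inside the minimum as separate upper bounds on the sum.

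The bound $1+\frac{1}{1-e^{-a}}$ is the trivial one: since $\max(\sqrt{i-t},1)\ge 1$, every summand is at most $e^{-a(T-i)}$, so, reindexing by $k=T-i$, the sum is at most $\sum_{k=0}^{T-t}e^{-ak}\le\sum_{k=0}^{\infty}e^{-ak}=\frac{1}{1-e^{-a}}\le 1+\frac{1}{1-e^{-a}}$. The spare $+1$ is slack, presumably kept so that the statement matches how the lemma is invoked later.

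For the bound $\frac{2}{\sqrt{N-K}}$ (writing $N=T+1-t$ and $K=\frac{2\ln(1+\sqrt N)}{a}$), reindex by $k=T-i$ so that the sum becomes $\Sigma=\sum_{k=0}^{N-1}\frac{e^{-ak}}{\max(\sqrt{N-1-k},1)}$. The idea is that the geometric weight $e^{-ak}$ has essentially decayed away by the time $k$ reaches order $\frac1a\ln(\cdot)$, while on exactly that range the denominator $\sqrt{N-1-k}$ is still close to $\sqrt{N-1}$. So I would split $\Sigma$ at $k^\star=K$ — which is chosen precisely so that $e^{-ak^\star}=(1+\sqrt N)^{-2}$, and which is exactly the amount subtracted inside the square root of the claimed bound. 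On the head $0\le k\le k^\star$ I would bound the denominator below by $\sqrt{N-1-k^\star}=\sqrt{N-K-1}$ (legitimate as long as this is at least $1$; when it is not, the second member of the minimum is the operative bound and there is nothing further to do) and bound $\sum_{k=0}^{k^\star}e^{-ak}\le\frac{1}{1-e^{-a}}$, so the head contributes at most $\frac{1}{(1-e^{-a})\sqrt{N-K-1}}$. On the tail $k>k^\star$ the weights are all at most $e^{-ak^\star}=(1+\sqrt N)^{-2}$, so summing the remaining geometric series bounds that part by $\frac{e^{-ak^\star}}{1-e^{-a}}=\frac{1}{(1+\sqrt N)^2(1-e^{-a})}$, which is genuinely lower order. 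Along the way I would use the elementary convexity estimates for $1-e^{-a}$ on $(0,1]$ — namely $1-e^{-a}\ge ae^{-a}$ and $1-e^{-a}\ge a-\frac12 a^2\ge\frac12 a$ — to move freely between $\frac{1}{1-e^{-a}}$ and $\frac1a$, and to play the resulting $\frac1a$ off against the $\ln(1+\sqrt N)$ already baked into $K$.

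The step I expect to be the main obstacle is the last one: the head estimate naturally carries a factor of $\frac{1}{1-e^{-a}}$, and getting the whole expression down to the clean constant $2$ in $\frac{2}{\sqrt{N-K}}$ requires a careful case analysis — pinning down the threshold on $N$ (as a function of $a$) above which this is genuinely the smaller member of the minimum, checking that the head/tail split closes with the stated constant on that range, and verifying that on the complementary range the bound $1+\frac{1}{1-e^{-a}}$ takes over. The fiddly corners — small $N$, small $a$, and $N$ barely exceeding $K$, together with the floor/ceiling slack introduced by reindexing a sum over integers — are where the two bounds cross and where most of the care goes; that bookkeeping is, I expect, exactly what the companion appendix lemma is carrying out.
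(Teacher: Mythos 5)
Your decomposition is the paper's decomposition. Both arguments split the sum at the index $i\approx T-\frac{2\ln(1+\sqrt{T+1-t})}{a}$, bound the block of indices near $T$ by summing the geometric weights and dividing by the still-large denominator $\sqrt{T+1-t-K}$, and bound the far block using the fact that all of its weights are at most $e^{-aK}=(1+\sqrt{T+1-t})^{-2}$; the only cosmetic difference is that on the far block the paper sums $1/\max(\sqrt{i-t},1)$ and multiplies by the largest weight, whereas you bound the denominator by $1$ and sum the geometric series --- both give a lower-order contribution. The second member of the minimum is handled identically.

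The gap is exactly the step you flagged as the main obstacle, and it is not bookkeeping that a further case analysis can absorb: the head estimate genuinely carries the factor $\frac{1}{1-e^{-a}}\approx\frac{1}{a}$, and it cannot be reduced to the constant $2$, because the inequality as stated is false for small $a$. Concretely, take $t=0$, $T=199$, $a=\frac17$. The eleven terms $i=189,\dots,199$ already contribute more than $0.42$ to the left-hand side (each is at least $e^{-10/7}/\sqrt{199}$, and the full sum is about $\frac{1}{(1-e^{-a})\sqrt{199}}\approx 0.53$), while the claimed bound is $\min\bigl(2/\sqrt{200-14\ln(1+\sqrt{200})},\,1+\frac{1}{1-e^{-1/7}}\bigr)=2/\sqrt{162.0}\approx 0.157$. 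More generally, whenever $\frac{1}{a}\ll T-t$ the sum is of order $\frac{1}{a\sqrt{T-t}}$, not $\frac{2}{\sqrt{T-t}}$. The paper's own proof fails at precisely this point: its bound on the second block is $\frac{1}{\sqrt{k-t}}\cdot\frac{1}{1-e^{-a}}$, and the factor $\frac{1}{1-e^{-a}}>1$ silently disappears in the following display. So your instinct was right that this is where all the difficulty lives, but no amount of care closes it; the first member of the minimum (and the corresponding terms in Theorem~\ref{thm:maxcost}) must carry an additional factor of $\frac{1}{1-e^{-a}}$, after which your argument completes and coincides with the paper's.
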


Combined, these yield a straightforward bound on worst-case total attack cost:
\begin{theorem}
	\label{thm:maxcost}
	The worst-case cost of the $w^{\max}$-attack is bounded by 
	\begin{align*}
		\atime \reward + \sum_{t=1}^\atime c_{t-1, l_t} &\leq \atime \reward + \valmul \powthresh\reward\left[\frac{4 \ln (1+\sqrt{\atime})}{(1-2\gdpowfrac - 2\powthresh)} + \sqrt{\frac{2}{\pi}} \frac{(1-\gdpowfrac - \powthresh)^{5/2}}{(\gdpowfrac + \powthresh)^{7/2}} 2 \sqrt{\atime}\right]
	\end{align*}
\end{theorem}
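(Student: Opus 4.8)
The plan is to bound the total attack cost term by term using the decomposition of $w^{\max}_{t,l+1} - w^{\max}_{t,l-1}$ already established in the excerpt, namely
\[
\specpayout_{t-1,l} = w^{\max}_{t,l+1} - w^{\max}_{t,l-1} = \valmul\powthresh\reward\sum_{\tau=t}^\atime \Pr\left[X^{t,l+1}_\tau\in\{1,2\}\right]\Pr\left[\tau=\argmin_{i\in\{t,\atime\}}X^{t,l+1}_i\;\middle|\;X^{t,l+1}_\tau\in\{1,2\}\right].
\]
First I would apply Lemma~\ref{lem:probexactbound} and Lemma~\ref{lem:probminbound} to the two factors inside this sum, obtaining
\[
\specpayout_{t-1,l}\leq \valmul\powthresh\reward\cdot\frac{(1-\gdpowfrac-\powthresh)^{5/2}}{\sqrt{2\pi}(\gdpowfrac+\powthresh)^{7/2}}\sum_{\tau=t}^\atime \frac{e^{-\frac12(\atime-\tau)(1-2\gdpowfrac-2\powthresh)^2}}{\max(\sqrt{\tau-t},1)}.
\]
(The $\max(\cdot,1)$ in the denominator absorbs the $\tau=t$ term, where Lemma~\ref{lem:probexactbound}'s bound degenerates; one checks separately that $\Pr[X^{t,l+1}_t\in\{1,2\}]\leq 1$ suffices there.)

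Next I would invoke Lemma~\ref{lem:tech1} with $a = \tfrac12(1-2\gdpowfrac-2\powthresh)^2$ and $T=\atime$, $t=t$. Since $\gdpowfrac+\powthresh<\tfrac12$ by assumption, $a\in(0,\tfrac12]\subseteq(0,1]$, so the lemma applies. The first branch of the $\min$ gives a bound of order $\frac{2}{\sqrt{\atime+1-t-2\ln(1+\sqrt{\atime+1-t})/a}}$ on the inner sum; bounding $\specpayout_{t-1,l}$ this way for each step $t$ and summing over $t=1,\dots,\atime$ produces a sum of the form $\sum_{t=1}^\atime \frac{2}{\sqrt{\atime+1-t-2\ln(1+\sqrt{\atime+1-t})/a}}$. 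I would split this sum at a threshold (say, the last $O(\ln(1+\sqrt\atime)/a)$ values of $t$, where the denominator's correction term dominates and one falls back on the second branch $1+\frac{1}{1-e^{-a}}$, contributing the $\frac{4\ln(1+\sqrt\atime)}{(1-2\gdpowfrac-2\powthresh)}$ piece), and for the remaining values compare the sum to an integral $\int_0^\atime \frac{ds}{\sqrt{s}} = 2\sqrt\atime$, which together with the constant prefactor $\frac{(1-\gdpowfrac-\powthresh)^{5/2}}{\sqrt{2\pi}(\gdpowfrac+\powthresh)^{7/2}}$ yields the $\sqrt{\tfrac2\pi}\frac{(1-\gdpowfrac-\powthresh)^{5/2}}{(\gdpowfrac+\powthresh)^{7/2}}2\sqrt\atime$ term. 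Adding the $\atime\reward$ immediate-loss term (one $\reward$ per attacking block, at most $\atime$ of them) gives the claimed bound.

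I expect the main obstacle to be the bookkeeping in the double-summation-to-integral step: one must verify that summing the per-step Lemma~\ref{lem:tech1} bounds over $t$ does not lose more than a constant factor, and in particular that the crossover between the two branches of the $\min$ is handled so that the logarithmic term appears with the stated constant $4$ rather than something larger. A secondary subtlety is the boundary case $\tau=t$ noted above, and the fact that the bound should hold for \emph{every} reachable state $l=l_t$ along any trajectory, not just a fixed $l$ — but since the Lemma~\ref{lem:probexactbound}/\ref{lem:probminbound} bounds are uniform in $l$ (the walk $X^{t,l+1}$ enters only through the event $\{X^{t,l+1}_\tau\in\{1,2\}\}$, and the given bounds do not depend on the starting height), this causes no real difficulty. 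Everything else is routine estimation once the two probabilistic lemmas and the technical lemma are in hand.
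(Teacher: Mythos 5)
Your proposal follows essentially the same route as the paper: the same decomposition of $\specpayout_{t-1,l}$ via the coupled walks, the same application of Lemmas~\ref{lem:probexactbound} and \ref{lem:probminbound}, then Lemma~\ref{lem:tech1} with $a=\tfrac12(1-2\gdpowfrac-2\powthresh)^2$, followed by splitting the sum over $t$ and comparing the head to $\int \tfrac{ds}{\sqrt{s}}=2\sqrt{\atime}$. The only cosmetic difference is that the paper handles the tail indices near $t=\atime$ with the trivial bound $\specpayout_{t-1,l}\leq w^{\max}_{t,l+1}\leq \valmul\powthresh\reward$ rather than the second branch of Lemma~\ref{lem:tech1}'s $\min$ (which is cleaner, since that branch carries the large prefactor), and your own caveat about tracking the constant in the logarithmic term is warranted in either version.
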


We also have a much lower bound on expected attack cost. 

\begin{theorem}
	\label{thm:expectedcost}
	The expected cost of the $w^{\max}$-attack is bounded by \[\reward\left(\frac{l_0}{2} + \frac{1}{2}\frac{l_0}{1-2\gdpowfrac}\right) + \valmul\powthresh\reward \atime e^{-\frac{(1-2\gdpowfrac-2\powthresh)^2\atime/2 - (1-2\gdpowfrac) l_0}{2}}\]
\end{theorem}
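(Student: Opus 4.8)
The plan is to split the attacker's total outflow into the two payment streams set up in Section~\ref{sec:complicated-attack}: a payment of $\reward$ (routed through the defending chain) for every attacking block mined, and the incentive bonus $\specpayout_{t-1,l}$ for every attacking block. Let $\tau^*$ be the conclusion time of the attack (the first $t$ with $l_t=0$, or $\atime$), and let $D$ be the number of attacking blocks mined by then, i.e.\ the number of down-steps of the equilibrium walk $(l_t)$ up to $\tau^*$. The block-reward outflow is exactly $\reward D$, and since $l_{\tau^*}=l_0-2D+\tau^*\ge0$ we have $D\le\tfrac12(l_0+\tau^*)$, hence $\E[\reward D]\le\reward\bigl(\tfrac{l_0}{2}+\tfrac12\E[\tau^*]\bigr)$; combining with the bound $\E[\tau^*]\le\tfrac{l_0}{1-2\gdpowfrac}$ recorded just above the theorem gives the first term of the claimed bound. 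Everything else is bounding the expected incentive outflow by the second term.

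At step $t$ at most one bonus, of size $\specpayout_{t-1,l_{t-1}}$, is paid, and only while the attack is still live, so the incentive outflow is at most $\sum_{t=1}^{\tau^*}\specpayout_{t-1,l_{t-1}}$. The point is that $\specpayout_{t-1,l}$ is typically tiny: using the identity $\specpayout_{t-1,l}=\valmul\powthresh\reward\Pr[\min_{i\in\{t,\dots,\atime\}}X^{t,l+1}_i\in\{1,2\}]$ established above together with the inclusion $\{\min\in\{1,2\}\}\subseteq\{X^{t,l+1}_\atime\ge1\}$, we get
\[
\specpayout_{t-1,l}\ \le\ \valmul\powthresh\reward\,\Pr\bigl[X^{t,l+1}_\atime\ge1\bigr] ,
\]
i.e.\ a large bonus at step $t$ forces the auxiliary walk $X^{t,l+1}$ — which drifts \emph{down} at rate $1-2\gdpowfrac-2\powthresh>0$ — to survive at level $\ge1$ all the way to the deadline $\atime$, and that requires $l$ to be of order $(\atime-t)(1-2\gdpowfrac-2\powthresh)$, a level the equilibrium walk has long since dropped below.

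To make this quantitative I would couple the absorbed equilibrium walk $(l_t)$ with the same walk run \emph{without} the barrier at $0$, call it $(\hat l_t)$; these agree up to any time $<\tau^*$, so $l_{t-1}=\hat l_{t-1}\ge1$ on $\{t\le\tau^*\}$. Since $\specpayout_{t-1,\cdot}$ and the bound above are nonnegative, deleting the ``attack still live'' indicator yields
\[
\E\Bigl[\textstyle\sum_{t=1}^{\tau^*}\specpayout_{t-1,l_{t-1}}\Bigr]\ \le\ \valmul\powthresh\reward\sum_{t=1}^{\atime}\Pr\bigl[X^{t,\hat l_{t-1}+1}_\atime\ge1\bigr],
\]
where the right-hand probability is over the concatenation of the first $t-1$ equilibrium increments (each $\pm1$ with mean $-(1-2\gdpowfrac)$) with the $\atime-t$ increments of a fresh auxiliary walk (each $\pm1$ with mean $-(1-2\gdpowfrac-2\powthresh)$) started from $\hat l_{t-1}+1$. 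This is a sum of $\atime-1$ independent increments in $[-1,1]$ started at $l_0+1$, whose mean endpoint lies below $1$ by at least $(\atime-1)(1-2\gdpowfrac-2\powthresh)-l_0$ \emph{uniformly in} $t$ (the worst case being $t=1$). Hoeffding's inequality bounds each of the $\atime$ summands by $\exp\bigl(-\tfrac{((\atime-1)(1-2\gdpowfrac-2\powthresh)-l_0)^2}{2(\atime-1)}\bigr)$; expanding the square, discarding the nonnegative $l_0^2/(2(\atime-1))$ term, and using $1-2\gdpowfrac-2\powthresh<1-2\gdpowfrac$ collapses the exponent to $\tfrac12\bigl(\tfrac{(1-2\gdpowfrac-2\powthresh)^2\atime}{2}-(1-2\gdpowfrac)l_0\bigr)$, leaving exactly $\valmul\powthresh\reward\,\atime\,e^{-(\cdots)}$. (When $\atime$ is small enough that this exponent is negative the stated bound already exceeds $\valmul\powthresh\reward\,\atime$, which trivially dominates the incentive outflow, so nothing is lost there.) Adding the two terms gives the theorem.

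The main obstacle is the coupling bookkeeping: one must check that replacing $l_{t-1}$ by the un-absorbed $\hat l_{t-1}$ and deleting the indicator is valid on both the absorbed and survived events, and that the auxiliary walk defining $\specpayout$ (drift $1-2\gdpowfrac-2\powthresh$) is concatenated \emph{after} the equilibrium walk (drift $1-2\gdpowfrac$) — this mixing is precisely what produces the $(1-2\gdpowfrac-2\powthresh)$-in-$\atime$ versus $(1-2\gdpowfrac)$-in-$l_0$ shape of the exponent. The Hoeffding step itself is routine, but pushing the constants so the exponent lands on the exact stated value rather than a merely comparable one takes some care. A slicker alternative that sidesteps the union bound: $w^{\max}_{t,l_t}+\powthresh\sum_{s<t}\specpayout_{s,l_s}$ is a martingale along the equilibrium walk, so optional stopping at $\tau^*$ (both objects being bounded) bounds the expected incentive outflow by $\valmul\reward$ times the probability that the auxiliary walk from $l_0$ stays positive through time $\atime$, which the same Hoeffding tail controls — but the union-bound route reproduces the theorem as literally stated.
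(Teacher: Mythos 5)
Your proposal is correct in substance, and the core technical step takes a genuinely different route from the paper's. Both proofs handle the block-reward term identically ($D\le\tfrac12(l_0+\tau^*)$ plus the $\tfrac{l_0}{1-2\gdpowfrac}$ stopping-time bound), and both rest on the same structural fact: $\specpayout_{t-1,l}$ is controlled by the probability that the downward-drifting auxiliary walk started at $(t,l+1)$ survives to the deadline, while the equilibrium walk has already drifted $l$ far below that survival threshold. The paper executes this by introducing a truncation level $k$, separately bounding (i) $\Pr[\exists t: l_t\ge k]$ via Hoeffding plus a union bound over $t$, (ii) $w^{\max}_{t,l}$ pointwise for $l<k$, and (iii) the survival probability of the attack at time $t$, then multiplying (ii) and (iii), summing, and optimizing over $k$ to balance the two exponents. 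You instead concatenate the $t-1$ equilibrium increments with the $\atime-t$ auxiliary increments into a single sum of $\atime-1$ independent bounded variables and apply one Hoeffding bound, which eliminates the auxiliary parameter $k$, the union bound, and the optimization entirely; this is cleaner and the resulting exponent $\frac{((\atime-1)(1-2\gdpowfrac-2\powthresh)-l_0)^2}{2(\atime-1)}$ is in fact slightly stronger than the paper's. The one place to be careful is exactly where you flagged it: if you literally discard the $\frac{l_0^2}{2(\atime-1)}$ term before comparing exponents, the comparison with the stated $\tfrac12\bigl(\tfrac{(1-2\gdpowfrac-2\powthresh)^2\atime}{2}-(1-2\gdpowfrac)l_0\bigr)$ can fail for small $\atime$ when $\powthresh<\tfrac{1-2\gdpowfrac}{4}$ and $l_0$ sits near the boundary where the target exponent vanishes; keeping the full square (or invoking your trivial-bound fallback a little more aggressively) closes this, since one checks that $\frac{(n\delta-l_0)^2}{2n}\ge\frac{\delta^2\atime}{4}-\frac{(1-2\gdpowfrac)l_0}{2}$ holds throughout the regime where the right-hand side is positive, using only $\delta=1-2\gdpowfrac-2\powthresh<1-2\gdpowfrac$. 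Your martingale aside is also valid (the $w^{\max}$ recurrence makes $w^{\max}_{t,l_t}+\powthresh\sum_{s<t}\specpayout_{s,l_s}$ a martingale along the equilibrium walk), though as you note it trades the factor $\powthresh\atime$ for $1$ and so yields a comparable rather than identical bound.
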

Critically, for sufficiently large $\atime$, we observe exponential decay in the expected attack cost beyond the per-block payout.

These bounds are far from tight. A graph of expected attack cost (excluding the per-block payouts of $\reward$) and attack success likelihood in terms of $\atime$ for an attack against $l_0=150$ is included in Figure \ref{fig:success_cost}.
\begin{figure}
	\centering
	\includegraphics[width=\textwidth]{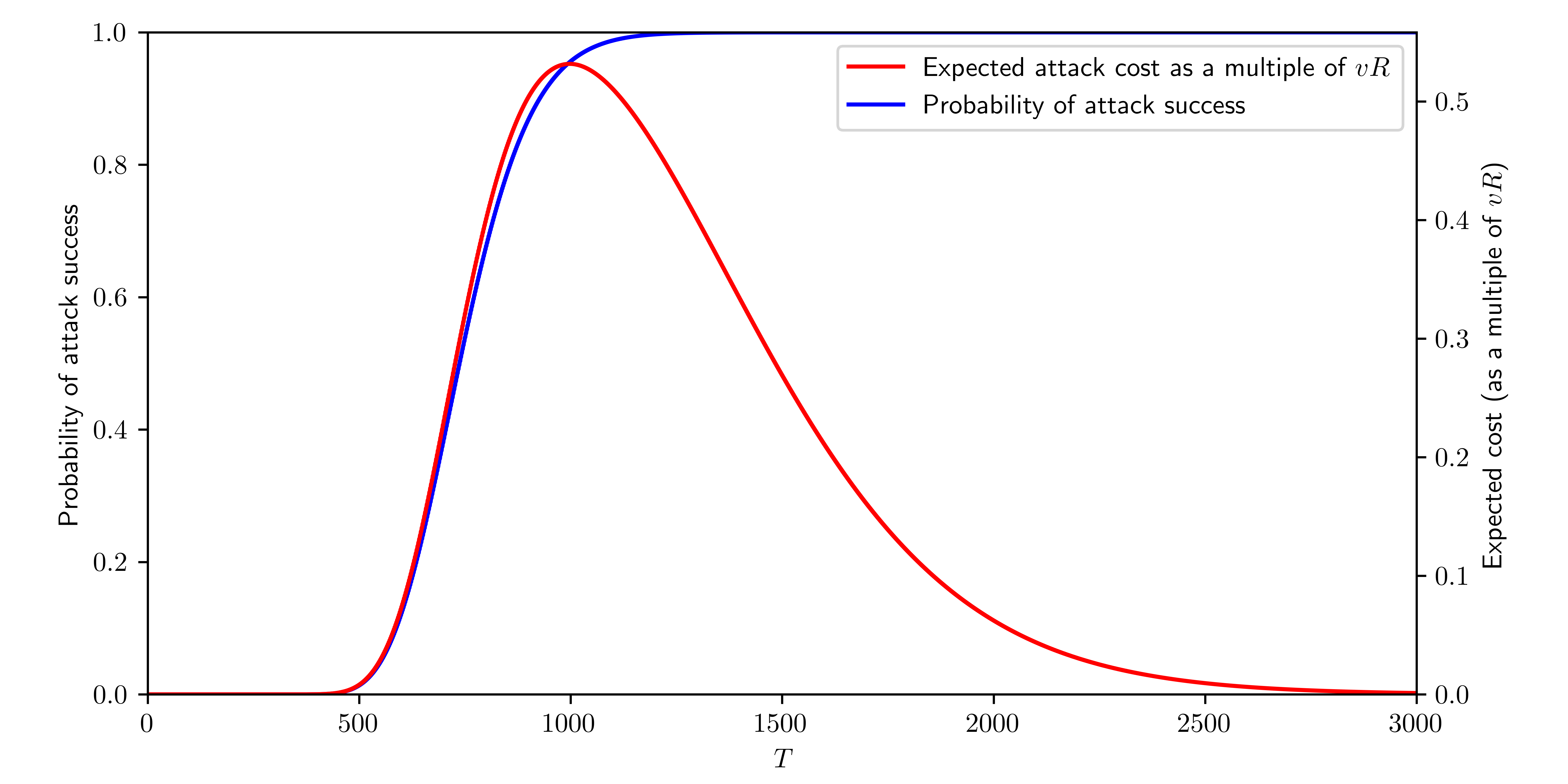}
	\caption{Probability of attack success and expected attack cost (excluding per-block payouts of $\reward$) for an attack with given $T$ and $l_0=150, \powthresh=0.05, \gdpowfrac=0.4$.}
	\label{fig:success_cost}
\end{figure}
The attack cost peaks when success is uncertain, as this is where participation/refusal has the strongest effect, and falls off rapidly as attack success becomes certain.

Two remaining points about the asymptotics are of interest. First, as $l_0$ increases, both attack success chance and attack cost converge more quickly in terms of $\frac{\atime}{l_0}$, as the expected result concentrates better. In particular, the premium over the naive cost of mining energy required to perform a range-$l_0$ attack in which $1-\gdpowfrac$ of mining power is participating is logarithmic in $l_0$, not linear. Second, due to the exponential term, $\powthresh$ does not need to be low to make the attack cost very low. For instance, $\powthresh=.2, \gdpowfrac=.25$ yields quite feasible attacks for $\frac{\atime}{l_0}$ substantially greater than $\frac{1-2\gdpowfrac}{(1-2\gdpowfrac-2\powthresh)^2}$. However, only a mild amount of successful collusion (two miners of size $.15$, for instance) is needed to prevent this attack.

\section{Relative Immunity of Proof-of-Stake: Ethereum}
Ethereum contains a significant defense against attacks of this type: slashing \cite{buterin2017casper}. As Ethereum severely penalizes stakers who validate on two incompatible chain branches, the internalized penalty for attack participation as compared to attack non-participation may be as high as the Budish payout if the attack is not expected to reduce the price of Ethereum. If attackers will lose their stake, the expected future payoff matrix is

\begin{center}
	\begin{tabular}{c||c|c|}
		& Attack succeeds & Attack fails\\
		\hline \hline
		Participate & $0$ & $0$\\
		\hline
		Refuse & $0$ & $\frac{\pf(m)}{\tpow} \valmul\reward$\\
		\hline
	\end{tabular}
\end{center}

This implies that short-range attacks (those in which attackers must possess vulnerable stake on the true chain) will cost close to the Budish estimate. Ethereum also has weak subjectivity: the property that any agent entering the network with access to a sufficiently recent honest network state can independently determine the present honest state \cite{buterin2014proof}. Assuming access to such states for entrants, this renders long-range attack impossible, so decentralization does not meaningfully reduce Ethereum attack cost.

\section{Practical Economic Considerations}
Attackers face two primary logistical difficulties: estimating the total power of participants and coordinating the attack. For PoW blockchains, both of these are ameliorated by setting up a specialized mining pool (with similar structure to that of \cite{bonneau2016buy}, albeit different intent). Such a pool could function normally until attack time, at which point it could begin sending work units for the attack chain, instead of the consensus chain. This approach has important secondary advantages: mining pools are commonly understood and easy to work with, and by offering low pool fees (or even paying pool participants slightly more than they mine, as suggested by \cite{bonneau2016buy} for a different purpose), switchover in advance of the attack could be incentivized while negligibly increasing attack cost. Stratum v2 supports header-only mining, allowing miners to attempt to mine an externally specified block, which should allow easy coordination of attack mining. The remaining coordination difficulty is in payout -- corruption payouts cannot flow through the original chain due to likelihood of devaluation. Assuming that another cryptocurrency (for instance, a PoS-based one) is expected to be unaffected by the attack, it could be used as a payment medium; otherwise, a classical medium would be needed.

Regarding pool mining, we also note that the attack proposed in Section~\ref{sec:complicated-attack} can be enhanced to provide lower payout variability with small increase in cost: in addition to the payouts given to participants who mine blocks, duplicate the defending-chain payout and split it proportionate to mining power across the pool. This guarantees participants the same consistency of payout as they would have received had they been participating in a large mining pool, while keeping cost similar, and may incentivize risk-averse/small miners to participate.

\section{Attacker Incentives}
\label{sec:attack-incentives}

The obvious purpose of a majority attack is to enable a doublespend, allowing a malicious actor to revert a transfer of a cryptocurrency pursuant to its sale, thus effectively stealing the sale price from the buyer. However, other incentives for majority attacks exist.

If blockchain technologies become more mainstream, non-currency-oriented attacks will become increasingly appealing. When substantial value may be placed on non-currency attributes of the chain (such as smart contract or DAO state, ownership of a NFT, etc.) in unpredictable ways, many non-obvious attack incentives will exist. This also makes attack attribution more difficult.

In addition to potentially being profitable, a majority attack may significantly if not completely devalue the attacked currency, and possibly others as well. The economic impact of this could be substantial. At time of writing, the global market cap of cryptocurrencies stands at about \$1 trillion USD, and various state or non-state actors may stand to gain substantially from a crash precipitated by an attack.

\subsection*{Case Study: A Bitcoin Attack}
We estimate the cost of a medium-range majority attack on Bitcoin. Distribution of control of mining capacity is a closely guarded secret, but mining power appears to be significantly spatially and internationally distributed \cite{sun2022spatial}. Assume that miners representing 60\% of mining power will participate if we set $\powthresh=1/20$ (i.e. $\gdpowfrac=0.4$). Consider an attack with a backwards range of $l_0=150$ blocks (about one day, for Bitcoin). Solving the recursion explicitly, expected attack cost for $\atime=2500$ is $<(0.01\valmul + 450)\reward$, and probability of success $\geq 1-10^{-12}$. Assuming $\valmul = 158000$ (the number of Bitcoin blocks in a little over 3 years), expected attack cost is $\leq 1942 \reward$, or a little under $12.2$k BTC/$310$M USD at time of writing. Maximum potential attack cost is substantially higher -- $2.38\valmul\reward$, or about $2.35$M BTC/$60$B USD -- but given that attack cost concentrates with exponential falloff, this may be insurable. $\powthresh$ and $\gdpowfrac$ may also be substantially overestimated here, leading to inflated cost bounds -- if we take $\powthresh=0.03, \gdpowfrac=0.2$, maximum payout is less than $820$k BTC/$21$B USD and expected payout is just over 1293 BTC/$33$M USD for a $\atime=400$ attack (with attack success chance $\geq 1-10^{-7}$).

This attack is cheap enough to be profitable, and therefore dangerous. Open Bitcoin options volume has consistently held above \$5 billion USD for three years \cite{TheBlock}, and has been substantially higher during peaks. Daily trading volume of BTC has stayed well above 200000 BTC for a similar period \cite{TheBlock}, and most BTC options are relatively short-term. Attackers able to successfully trade and liquidate options at these scales could make substantial profit.

One example of an attack strategy is as follows. First, buy $B$ ``covering'' units of BTC, and short-sell a separate $B$ units of BTC. Sell the covering units, and attack to revert that transfer. Provided that the attack succeeds, the recouped covering bitcoin may be used to cover the short position, and as its initial buy cost and sale profit are approximately equal, the net profit of the attack is equal to the gross income from the short sale. In the case of attack failure, the buy and sell costs of the covering units cancel, and cost is equal to the cost of covering the short positions minus the profit from selling them (i.e., it is as if the attackers had simply shorted BTC), less transaction costs from the purchase and sale. Even in this case, the attack may substantially devalue BTC, creating some profit from the short position.

\section{Economic Prevention and Mitigation}
\label{sec:prev}
On a protocol level, this attack is equivalent to a 51\% attack, and the same impossibility results apply to protocol-level defense. On a non-protocol level, \cite{bonneau2016buy} discusses several prevention measures. We discuss the most relevant of these, and some others, here.

\subsection{Coalitions}
In response to an incipient attack, a coalition of miners may agree on a mutual behavior contract designed to disincentivize attack participation. In particular, miners could commit to non-subgame-perfect equilibrium strategies, such as defending until a certain threshold of miners have been shown to participate in the attack, and then attacking thereafter, which stymie the given payout rule. However, credible commitment is difficult for miners, and such Nash equilibria are not robust to the attacker designing new payout schemes in response. We conjecture that given any such commitment scheme, the attacker can design a bribery system with expected payout and failure probability bounded by those of the subgame perfect Nash equilibrium and corresponding payout rule. For instance, in the above example, the attacker could increase payouts to miners until the threshold has been passed, and then remove them once it has.

Coalitions with leaders, such as some mining pools, present a separate danger to blockchains: if coalition mining power is controlled by a player whose expected future payout from attack failure is lower than $\valmul \reward$, the controller may be incentivized to make the coalition participate even if it falls above the power threshold. For instance, if the controller of some mining pool receives 1\% of the profits generated, they will be incentivized to cause the pool to participate if its power is $\leq 100\powthresh\tpow$.

\subsection{Social Consensus as Deterrence/Mitigation}
If a sufficiently broad set of participants wishes, they may fork or otherwise alter the blockchain after an attack in an attempt to reverse the effects of the attack. This is unlikely to prevent profit-taking, as it must be done rapidly enough to fix all manipulated transactions before they can be used for profit. For instance, in the attack proposed in the previous section, the sale of $B$ units of BTC must be fixed as canonical before it can be sold again in the attack chain. Moreover, this has obvious disadvantages for regular use, and does not prevent repeated sabotage attacks \cite{budish2018economic} that would effectively make the blockchain unusable (which can itself be profitable for an attacker).

\subsection{Counterattacks and the Model of Moroz et al.}
Moroz et al. \cite{moroz2020double} argue that in many cases, the threat of counterattacks renders double-spend attacks unprofitable. They analyze a model in which a single large transaction is in contention between the sender (the attacker) and the receiver (the defender), with the former wishing to establish a heaviest chain not including the transaction and the latter wishing the opposite. Each party is allowed to purchase mining power on an open market in order to attempt majority attacks to shift consensus, and take turns doing so. Moroz et al. find that under certain conditions, the only equilibrium strategy is to not attack in the first place.

However, their model (and therefore results) are inapplicable in our case, as both parties may have external rewards that do \textit{not} depend on the final status of the transaction. The success of one or more attacks will almost certainly substantially reduce the value of the attacked cryptocurrency, independent of the final state of the blockchain. This itself may be a major source of utility for the attacker (see Section \ref{sec:attack-incentives}) and disutility for the defender (the defender will recoup coins of lesser value if they are successful, and if they hold additional units of the currency, those two will be devalued by attacks).

When this assumption is changed, the no-attack equilibrium established by Moroz et al. does not hold, and attack with no response is equilibrium in many practical cases. For instance, if a single successful attack will massively devalue a currency, but the attacker can make profit from this event, the attack is incentivized \emph{even if it will be counterattacked}, and a counterattack is generally not incentivized.

\subsection{Countercorruption}
As noted in \cite{bonneau2016buy}, miners above the threshold, targets of a large double-spend, and other entities with stake in the success of blockchains may be incentivized to attempt to bribe miners to \textit{not} participate. While this may be theoretically sufficient in some cases, it is undesirable for a variety of reasons discussed in \cite{bonneau2016buy}.

\subsection{Extra Confirmations}
Historically, requiring more confirmations has been seen as a natural way to secure transactions. However, the attack cost for even long-range attacks is relatively low, and the splitting tactic noted by \cite{bonneau2016buy} likely allows evasion unless many transactions require extremely long confirmation periods.

\section{Non-Economic Prevention and Mitigation}

We see that there exist potential large-scale corruptive majority attacks that are incentive-compatible for all participants and highly profitable for the attacker, and that economic forces are insufficient to disincentivize attack initiation or participation. However, this does not preclude non-economic forces preventing these attacks in practice. We briefly discuss three potential avenues of prevention.

\subsection{Social Consensus} If a power-weighted majority of miners refuse to participate, the attack will be stymied \cite{bonneau2016buy}. However, for this to work, we require a broad coalition of miners to act against their own self-interest for philosophical or other non-economic reasons. Miners have regularly behaved selfishly in contravention of protocol \cite{yaish2022uncle}, so confidence that they will behave selflessly in this case seems misplaced. For instance, miners did not leave F2Pool in large numbers despite its timestamp manipulation, implying that miners are willing to participate in activity with negative network consequences for mildly increased personal profit.

\subsection{Force} Actors subject to the jurisdiction of a force (legal or otherwise) able to both detect attack organizers/participants and impose sufficient punishments against them are unlikely to perform/abet attacks. Most immediately, various state actors are likely to take a dim view of such an attack. In the United States, for instance, any attempt to organize such an attack would likely constitute securities market manipulation, and while the legal system is woefully unequipped to deal with the consequences of a double-spend attack, organizing (and perhaps merely participating in) one would likely incur substantial civil and criminal liability. This may be enough to dissuade most actors from performing a majority attack, but it will not be enough to disincentivize those who operate outside the bounds of the law and/or with other motivations. Even individuals subject to regulatory jurisdiction may be insufficiently deterred if the attack is conducted through privacy-preserving tools, as attribution in such a case might be difficult.

\subsection{Non-Profitability} One of the incentives to carry out such an attack is profitability. Substantially reducing the viability of profiting off of the collapse of a cryptocurrency could reduce or eliminate the profit motive for majority attacks if its price is expected to collapse by the time the heaviest chain is revised.

\section{Conclusion: What \textit{Do} We Trust?}
We clarify the analysis of miner incentives surrounding majority attacks, showing that the cost to bribe miners into attacking is far less than previously believed in proof-of-work blockchains. The cost is low enough that a corruptive majority attack could be profitable if combined with an appropriate strategy combining shorting and doublespending. Moreover, the value of the attack scales with the value of the blockchain and its associated assets/activity, and the cost of the attack decreases as the miner pool becomes more diverse and incentive-driven, implying that the danger will likely continue to increase as cryptocurrencies grow. Mitigating these attacks through mechanism design is nearly impossible, as doing so would require a method to either force small miners to internalize large social costs or prevent the attacker from profit-taking.

It is unclear whether these attacks can be prevented by non-economic factors. Further work is needed to determine the likelihood of attack in real-world contexts, but such analysis will necessarily be somewhat speculative. Even if these forces can prevent attacks in practice, the advantages of using distributed systems over more classical ledgers become substantially less clear when trust in their stability rests on widespread altruism, institutional force, or other non-economically-motivated behavior. Further analysis of the non-economic forces that may prevent majority attacks, and their implications for the usefulness and viability of blockchain technologies, is warranted.

\bibliographystyle{plain}
\bibliography{main.bib}

\pagebreak
\appendix

\section{Notation Reference}
For convenience, we define the notations that appear in this paper here:
\begin{itemize}
	\item $\pool$: the set of miners.
	\item $\tpow$: the total mining power of all miners.
	\item $\pf:M\to [0, \tpow]$: the function mapping miners to their mining powers.
	\item $\atime$: the duration, in blocks, of an attack.
	\item $\reward$: the (expected) value of mining a block under the stable condition of the blockchain.
	\item $\valmul$: a multiplier such that any miner $m$ who expects per-block reward $\frac{\pf(m)}{\tpow} R$ has time-discounted total future profits $\leq \valmul \frac{\pf(m)}{\tpow} R$. That this exists (and may be reasonably bounded) follows from time-discounting, reduction in rewards over time, and hardware improvement/death.
	\item $\probflip\left(\frac{\pf(m)}{\tpow}\right)$: an upper bound on miner $m$'s estimate of the probability that their participation/non-participation in the attack will determine whether it succeeds.
	\item $\probthresh$: a threshold on $\probflip$. We design an attack that will incentivize participation by miners with $\probflip\left(\frac{\pf(m)}{\tpow}\right)\leq \probthresh$.
	\item $\powthresh$: a threshold on $\frac{\pf(m)}{\tpow}$ with similar purpose to the above.
	\item $\gdpowfrac$: the fraction of total power held by honest miners.
\end{itemize}

\section{Proofs of Results in Section \ref{sec:complicated-attack}}
\subsection*{Lemma \ref{lem:probexactbound}}
\begin{proof}
	\begin{align*}
		&\Pr\left[X^{t, l+1}_\tau\in \{1, 2\}\right]\\ =\;&\binom{\tau-t}{\frac{\tau-t+l}{2}}(1-\gdpowfrac-\powthresh)^{\frac{\tau-t+l}{2}}(\gdpowfrac+\powthresh)^{\frac{\tau-t-l}{2}}\\
		&\qquad+ \binom{\tau-t}{\frac{\tau-t+l-1}{2}}(1-\gdpowfrac-\powthresh)^{\frac{\tau-t+l-1}{2}}
		(\gdpowfrac+\powthresh)^{\frac{\tau-t-l+1}{2}}
	\end{align*}
	where the binomial coefficient $\binom{n}{k}$ is taken to be 0 if $k$ is fractional. One term will always be 0, and so letting $m=\tau - t$, we may bound this by 
	\begin{align*}
		&\sup_{l\in \{0, 1, \dots, m\}} \binom{m}{\frac{m+l}{2}}(1-\gdpowfrac-\powthresh)^{\frac{m+l}{2}}(\gdpowfrac+\powthresh)^{\frac{m-l}{2}}\\
	\end{align*}
	Note that this is effectively only over $l$ of the same parity as $\tau - t$.
	
	Let $K_l = \binom{m}{\frac{m+l}{2}} (1-\gdpowfrac-\powthresh)^{\frac{m+l}{2}} (\gdpowfrac+\powthresh)^{\frac{m-l}{2}}$. This admits a smooth extension to $l\in [0, \tau - t]$ via the gamma function. We may bound this extension (via Stirling approximation) by 
	\begin{align*}
		K_l&\leq \sup_{l\in [0, \tau - t]} \sqrt{\frac{m}{2\pi \frac{m+l}{2}\frac{m-l}{2}}} \frac{(m)^{m}}{\left(\frac{m+l}{2}\right)^{\frac{m+l}{2}}{\left(\frac{m-l}{2}\right)^{\frac{m-l}{2}}}}(1-\gdpowfrac-\powthresh)^{\frac{m+l}{2}}(\gdpowfrac+\powthresh)^{\frac{m-l}{2}}
		&\leq 
	\end{align*}
	
	We observe that for integer $l$ of parity $\tau-t$, 
	\begin{align*}
		\frac{K_{l+2}}{K_l} &= \frac{\frac{m!}{(\frac{m - (l+2)}{2})!(\frac{m + (l+2)}{2})!}}{\frac{m!}{(\frac{m-l}{2})!(\frac{m+l}{2})!}} \frac{1-\gdpowfrac - \powthresh}{\gdpowfrac+\powthresh}\\
		&= \frac{m- l}{m + l+2}\frac{1-\gdpowfrac - \powthresh}{\gdpowfrac+\powthresh}
	\end{align*}
	In particular, $K_l$ is maximized across integers at the lowest $l$ of parity $\tau-t$ s.t. the above ratio is $\leq 1$ (as this ratio is decreasing in $l$). Solving $\frac{m - l}{m + l+2}\frac{1-\gdpowfrac - \powthresh}{\gdpowfrac+\powthresh}=l$ gives $l=m(1-2(\gdpowfrac + \powthresh)) - 2(\gdpowfrac + \powthresh)$. In particular, the integer $l^{\max}$ maximizing $K_l$ is somewhere between $m(1-2(\gdpowfrac + \powthresh))-3$ and $m(1-2(\gdpowfrac + \powthresh))$. Plugging $l^*=m(1-2(\gdpowfrac + \powthresh))$ into our bound on $K_l$ gives
	\begin{align*}
		\frac{1}{2\sqrt{m}} \sqrt{\frac{1}{2(\gdpowfrac + \powthresh)(1-2(\gdpowfrac + \powthresh))}}
	\end{align*}
	Moreover, we may note that
	\begin{align*}
		\frac{K_{l^*-c}}{K_{l^*}} &= \frac{\Gamma\left(\frac{m+l^*}{2}\right)/\Gamma\left(\frac{m+l^*-c}{2}\right)}{\Gamma\left(\frac{m-l^*+c}{2}\right)/\Gamma\left(\frac{m-l^*}{2}\right)}(1-\gdpowfrac-\powthresh)^{c/2}(\gdpowfrac - \powthresh)^{-c/2}\\
		&\leq  \left(\frac{(m+l^*)(1-\gdpowfrac-\powthresh)}{(m-l^*)(\gdpowfrac + \powthresh)}\right)^{c/2}\\
		&= \left(\frac{1-\gdpowfrac - \powthresh}{\gdpowfrac + \powthresh}\right)^c
	\end{align*}
	
	Then letting $l^{\max}\in [l^*-3, l^*]$ be the maximizer of $K_l$, we obtain that \[\sup_{l} K_l = K_{l^{\max}} \leq \frac{1}{\sqrt{2\pi m}} \sqrt{\frac{1}{(\gdpowfrac + \powthresh)(1-\gdpowfrac - \powthresh)}}\left(\frac{1-\gdpowfrac - \powthresh}{\gdpowfrac + \powthresh}\right)^3\]
\end{proof}

\subsection*{Lemma \ref{lem:probminbound}}
\begin{proof}
	It suffices to observe that
	\begin{align*}
		\Pr\left[\tau=\argmin_{i\in \{t, \atime\}} X^{t, l+1}_i|X^{t, l+1}_\tau\in \{1, 2\}\right] &\leq \Pr\left[\tau=\argmin_{i\in \{\tau, \atime\}} X^{t, l+1}_i|X^{t, l+1}_\tau\in \{1, 2\}\right]\\
		&=\Pr\left[\tau=\argmin_{i\in \{\tau, \atime\}} X^{\tau, X^{t, l+1}_\tau}_i\right]\\
		&\leq \Pr\left[X^{\tau, X^{t, l+1}_\tau}_T\geq X^{\tau, X^{t, l+1}_\tau}_\tau\right]\\
		&\leq e^{-\frac{1}{2}(\atime-\tau)(1-2\gdpowfrac-2\powthresh)^2}
	\end{align*}
	by Hoeffding's inequality applied to $X^{\tau, X^{t, l+1}_\tau}_T-X^{\tau, X^{t, l+1}_\tau}_\tau$, which is by definition a sum of $T-\tau$ random variables which are independently $-1$ with probability $1-\gdpowfrac-\powthresh$ and $1$ with probability $\gdpowfrac + \powthresh$.
\end{proof}

\subsection*{Lemma \ref{lem:tech1}}
\begin{proof}
	$T=t$ is trivial. Otherwise, fix $t< k \leq  T$. Then 
	\begin{align*}
		\sum_{i=t}^T \frac{e^{-a(T-i)}}{\max(\sqrt{i-t}, 1)} & \leq \sum_{i=t}^{k-1} \frac{e^{-a(T-i)}}{\max(\sqrt{i-t}, 1)} + \sum_{i=k}^T \frac{e^{-a(T-i)}}{\max(\sqrt{i-t}, 1)}\\
		&\leq \sum_{i=t}^{k-1} \frac{e^{-a(T-(k-1))}}{\max(\sqrt{i-t}, 1)} + \sum_{i=k}^T \frac{e^{-a(T-i)}}{\sqrt{k-t}}\\
		&\leq \left(1 + \sqrt{k-1-t}\right) e^{-a(T-(k-1))} + \frac{1}{\sqrt{k-t}}\frac{1}{1-e^{-a}}
	\end{align*}
	
	Setting $k=\max \left(\lceil T-2\frac{\ln (1+\sqrt{T+1-t})}{a}+1\rceil, t+1 \right)$ yields that the above is
	\begin{align*}
		&\leq \min\left(\frac{1}{1+\sqrt{T+1-t}} + \frac{1}{\sqrt{T+1-t - 2\frac{\ln (1+\sqrt{T+1-t})}{a}}}, 1+ \frac{1}{1-e^{-a}}\right)\\
		&\leq \min\left(\frac{2}{\sqrt{T+1-t - 2\frac{\ln (1+\sqrt{T+1-t})}{a}}}, 1+ \frac{1}{1-e^{-a}}\right)
	\end{align*}
\end{proof}

\subsection*{Theorem \ref{thm:maxcost}}
\begin{proof}
	We have
	\begin{align*}
		\specpayout_{t-1, l} &=w^{\max}_{t, l+1} - w^{\max}_{t, l-1} \\
		&=\valmul\powthresh\reward\sum_{\tau=t}^\atime \Pr\left[X^{t, l+1}_\tau\in \{1, 2\}\right] \Pr\left[\tau=\argmin_{i\in \{t, \atime\}} X^{t, l+1}_i|X^{t, l+1}_\tau\in \{1, 2\}\right]\\
		&\leq \valmul\powthresh\reward\frac{\left(1-\gdpowfrac - \powthresh\right)^{5/2}}{\sqrt{2\pi}\left(\gdpowfrac + \powthresh\right)^{7/2}}
		\sum_{\tau=t}^\atime \frac{ e^{-\frac{1}{2}(\atime-\tau) (1-2\gdpowfrac-2\powthresh)^2}}{\max\left(\sqrt{\tau - t} , 1\right)}
	\end{align*}
	by Lemmas \ref{lem:probexactbound} and \ref{lem:probminbound}. Applying Lemma \ref{lem:tech1} with $a = \frac{(1-2\gdpowfrac-2\powthresh)^2}{2}$, combined with the trivial bound $\specpayout_{t-1, l}\leq w^{\max}_{t, l+1}\leq \valmul\powthresh\reward$ yields
	
	\begin{align*}
		\specpayout_{t-1, l}  &\leq \valmul\powthresh\reward\min \left( \frac{\left(1-\gdpowfrac - \powthresh\right)^{5/2}}{\sqrt{2\pi}\left(\gdpowfrac + \powthresh\right)^{7/2}} \frac{2}{\sqrt{\atime+1-t - 4\frac{\ln (1+\sqrt{\atime+1-t})}{(1-2\gdpowfrac-2\powthresh)^2}}}, 1\right)\\
		&\leq \valmul\powthresh\reward\min \left( \frac{\left(1-\gdpowfrac - \powthresh\right)^{5/2}}{\sqrt{2\pi}\left(\gdpowfrac + \powthresh\right)^{7/2}} \frac{2}{\sqrt{\atime+1-t - 4\frac{\ln (1+\sqrt{\atime})}{(1-2\gdpowfrac-2\powthresh)^2}}}, 1\right)
	\end{align*}
	
	Then the result follows directly from partitioning the indices at $T+1 - 4\frac{\ln (1+\sqrt{\atime})}{(1-2\gdpowfrac-2\powthresh)^2}$.
\end{proof}

\subsection*{Theorem \ref{thm:expectedcost}}
\begin{proof}
	Fix two walks $X$ and $Y$ on the integers starting at $l_0$. At each step, let $X$ increment with probability $\gdpowfrac$ and decrement with probability $1-\gdpowfrac$, and let $Y$ increment with probability $\gdpowfrac+\powthresh$ and decrement with probability $1-\gdpowfrac-\powthresh$. $X$ will correspond to game state (until hitting 0 -- i.e. the distributions of $l_t$ and $X_t$ are identical over positive integers), and $Y_t$ is a virtual walk corresponding to the $w^{\max}$ recurrence. 
	
	Fix $k>l_0$. By Hoeffding's inequality on $X_t$, the probability that $l_t\geq k$ for given $t$ is $\leq e^{-\frac{(k+(1-2\gdpowfrac)t)^2}{2t}}\leq e^{-(1-2\gdpowfrac)(\frac{(1-2\gdpowfrac)t}{2}+k)}$. Taking a union bound across all $t$, the probability that there exists a timestep $t$ with $l_t\geq k$ is
	\begin{align*}
		&\leq \sum_{i=1}^\atime e^{-(1-2\gdpowfrac)(\frac{(1-2\gdpowfrac)t}{2}+k)}\\
		&< \sum_{t=1}^{\infty} e^{-(1-2\gdpowfrac)(\frac{(1-2\gdpowfrac)t}{2}+k)}\\
		&<\frac{e^{-(1-2\gdpowfrac)k}}{1-e^{-\frac{(1-2\gdpowfrac)^2}{2}}}
	\end{align*}
	
	We may observe by its recursion that 
	\begin{align*}
		w^{\max}_{t, l} &= \valmul\powthresh\reward \Pr[\min_{i\in [\atime]} Y_i > 0|Y_t\leq l]\\
		&\leq \valmul\powthresh\reward\Pr[ Y_{\atime} > 0|Y_t\leq l]\\
		&\leq
		\begin{cases}
			(1-2\gdpowfrac - \powthresh)(\atime - t) < l: & \valmul\powthresh\reward \\
			(1-2\gdpowfrac - \powthresh)(\atime - t) \geq l: & \valmul\powthresh\reward  e^{-\frac{(l - (1-2\gdpowfrac - \powthresh)(\atime - t))^2}{2(\atime - t)}}\\
		\end{cases}\\
		&\leq \valmul\powthresh\reward e^{l(1-2\gdpowfrac - 2\powthresh) - \frac{(1-2\gdpowfrac - 2\powthresh)^2(\atime - t)}{2}}
	\end{align*}
	where the concentration follows from Hoeffding's inequality. We may also bound the probability that the attack has not succeeded by the start of step $t$, conditioned on there not existing a timestep $t$ with $l_t\geq k$, as 
	\begin{align*}
		\Pr[X_{t-1}>0 | \sup_{i\in [\atime]} X_i < k] &\leq \Pr[X_{t}\geq 0 | \sup_{i\in [\atime]} X_i < k] \\
		&\leq \Pr[X_{t}\geq 0]\\
		&\leq e^{l_0(1-2\gdpowfrac) - \frac{(1-2\gdpowfrac)^2 t}{2}}
	\end{align*}
	by the same.
	
	Finally, we observe that \[\specpayout_{t-1, X_{t-1}}\leq w^{\max}_{t, X_{t-1}+1}\leq  e^{(X_{t-1}+1)(1-2\gdpowfrac - 2\powthresh) - \frac{(1-2\gdpowfrac - 2\powthresh)^2(\atime - t)}{2}}\] In particular, conditioned on $\forall i\in [\atime]:X_i<k$, we have 
	\[\specpayout_{t-1, l}\Pr[X_{t}\geq 0 | \forall i\in [\atime]:X_i<k]\leq e^{l_0 (1-2\gdpowfrac) +  k(1-2\gdpowfrac-2\powthresh) - \atime (1-2\gdpowfrac-2\powthresh)^2/2}\]
	
	Then we bound expected attack cost $C$ (excluding the per-attacking-block payout) as 
	\begin{align*}
		\E[C] &\leq \E[C] \Pr[\exists t\in [\atime]: X_t\geq k] + \E[C|\forall t\in [\atime]: X_t < k] \\
		&\leq \E[C] \Pr[\exists t\in [\atime]: X_t\geq k] + \sum_{t=1}^\atime \specpayout_{t-1, X_{t-1}<k} \Pr[X_{t-1}>0 | \sup_{i\in [\atime]} X_i < k]\\
		&\leq \valmul\powthresh\reward\left[\atime \frac{e^{-(1-2\gdpowfrac)k}}{1-e^{-\frac{(1-2\gdpowfrac)^2}{2}}}+\atime e^{l_0 (1-2\gdpowfrac) +  k(1-2\gdpowfrac-2\powthresh) - \atime (1-2\gdpowfrac-2\powthresh)^2/2}\right]
	\end{align*}
	Solving $-(1-2\gdpowfrac)k = l_0 (1-2\gdpowfrac) + k(1-2\gdpowfrac-2\powthresh) - \atime(1-2\gdpowfrac-2\powthresh)^2/2$ yields $k=\frac{(1-2\gdpowfrac-2\powthresh)^2\atime/2 - (1-2\gdpowfrac) l_0}{2-4\gdpowfrac-2\powthresh}$ and therefore total expected cost 
	\begin{align*}
		&\leq \valmul\powthresh\reward \atime e^{-(1-2\gdpowfrac) \frac{(1-2\gdpowfrac-2\powthresh)^2\atime/2 - (1-2\gdpowfrac) l_0}{2-4\gdpowfrac-2\powthresh}}\\
		&\leq \valmul\powthresh\reward \atime e^{-\frac{(1-2\gdpowfrac-2\powthresh)^2\atime/2 - (1-2\gdpowfrac) l_0}{2}}
	\end{align*}
	
	The expected number of attacking blocks mined is bounded by $\frac{l_0}{2} + \frac{1}{2}\frac{l_0}{1-2\gdpowfrac}$ (as the biased random walk is expected to last $\frac{l_0}{1-2\gdpowfrac}$ steps, and if the attack lasts $k$ blocks, the attackers mine at most $\frac{l+k}{2}$ of them. Then total expected attack cost is bounded by 
	\[\reward\left(\frac{l_0}{2} + \frac{1}{2}\frac{l_0}{1-2\gdpowfrac}\right) + \valmul\powthresh\reward \atime e^{\frac{(1-2\gdpowfrac-2\powthresh)^2\atime/2 - (1-2\gdpowfrac) l_0}{2}}\]
\end{proof}

\end{document}